\documentclass[
]{CEURART/ceurart}

\usepackage{bbding}
\usepackage{comment}
\usepackage{stmaryrd}
\usepackage{forest}
\usepackage{longtable}
\newcolumntype{H}{@{}>{\setbox0=\hbox\bgroup}c<{\egroup}@{}}
\usepackage{todonotes}

\usepackage{listings}
\usepackage{tikz}
\usepackage{circuitikz}
\usepackage{float}
\usepackage{bussproofs}

\usepackage{subcaption}
\usepackage{adjustbox}

\newtheorem{theorem}{Theorem}[section]

\newtheorem{corollary}[theorem]{Corollary}
\newtheorem{defin}[theorem]{Definition}
\newtheorem{example}{Example}[section]

\newproof{proof}{Proof}

\lstdefinelanguage{scala}{
  alsoletter={@,=,>},
  morekeywords={abstract, case, class, def, do, Input, Output, then,
        else, extends, false, free, if, implicit, match,
        object, true, val, var, while, sealed, or,
        for, dependent, null, type, with, try, catch, finally,
        import, final, return, new, override, this, trait,
        private, public, protected, package, throw, enum, require, assert},
  sensitive=true,
  morecomment=[l]{//},
  morecomment=[s]{/*}{*/},
  morestring=[b]",
}
  {\gdef\scalefactor{#1}\begin{center}\proofSkipAmount \leavevmode}%
  {\resizebox{\scalefactor}{!}{\DisplayProof}\proofSkipAmount \end{center} }

\renewcommand{\O}{\mathcal O}
\renewcommand{\P}{\mathcal P}

  \newcommand{\letspace}{{\kern-0.1em}}

  \newcommand{\logicstyle}[1]{#1}

  \newcommand{\OL}{\ensuremath{\logicstyle{O{\letspace}L}}\xspace}

  \newcommand{\NFOL}{\NF_\OL\xspace}

\DeclareMathOperator{\NF}{NF}

\DeclareMathOperator{\nnf}{NNF}
\DeclareMathOperator{\getInverse}{getInverse}

\renewcommand{\slash}[1]{{{\kern-0.05em}/{\kern-0.08em}{#1}}}

\newcommand{\simw}[1]{\sim_{{\kern-0.1em}#1}}

\newcommand{\term}[2]{\mathcal{T}_{#2}%
    \ifthenelse{\isempty{#1}}%
    {}%
    {(#1)}%
}
\newcommand{\optarg}[2]{%
  \ifthenelse{\isempty{#1}}%
    {}
    {(((#1)))}
}

\usepackage{framed}
\usepackage{mathpartir}
\usepackage{algorithm}
\usepackage[noend]{algpseudocode}

\makeatletter
\let\saved@ref\ref
\renewcommand{\ref}{\begingroup\catcode`\_=8\relax\ref@inner}
\newcommand{\ref@inner}[1]{\saved@ref{#1}\endgroup}
\let\saved@autoref\autoref
\renewcommand{\autoref}{\begingroup\catcode`\_=8\relax\autoref@inner}
\newcommand{\autoref@inner}[1]{\saved@autoref{#1}\endgroup}
\let\saved@label\label
\renewcommand{\label}{\begingroup\catcode`\_=8\relax\label@inner}
\newcommand{\label@inner}[1]{\saved@label{#1}\endgroup}
\makeatother

\newcommand{\smartparagraph}[1]{\noindent\textbf{#1}}

\title{Orthologic for SAT Solving}
\author[1]{Vladislas de Haldat}
\author[1]{Simon Guilloud}[orcid=0000-0001-8179-7549,email=simon.guilloud@epfl.ch]
\author[1]{Viktor Kun\v{c}ak}[orcid=0000-0001-7044-9522]
\address[1]{EPFL}

\copyrightyear{2026}
\copyrightclause{Copyright for this paper by its authors.
  Use permitted under Creative Commons License Attribution 4.0
  International (CC BY 4.0).}
\conference{}

\begin{document}
\addtolength{\textfloatsep}{-0.2in}
\sloppy

\maketitle
\begin{abstract}
We present a new algorithm for deciding formula entailment in orthologic (a sound approximation of classical logic) that avoids the costly preprocessing phase of prior implementations while retaining the same $\mathcal{O}(n^2(1+|A|))$ worst-case complexity. We then introduce a family of synthetic SAT benchmarks based on the observation that, for any formula $\phi$, the equivalence $\phi \leftrightarrow \mathrm{NF}_{\mathrm{OL}}(\phi)$ is a tautology whose Tseitin encoding yields unsatisfiable instances that are hard for state-of-the-art SAT solvers yet have short orthologic proofs. Applied to EPFL arithmetic circuits, our algorithm solves these instances efficiently while Kissat times out on a significant fraction. Finally, we show that using orthologic normalization as a preprocessing step can improve SAT solving time on some hard problems.
\end{abstract}

\section{Introduction}
\label{sec:introduction}
Satisfiability checking has many applications including verification, testing, and optimization. It is also the prototypical NP-complete problem, which gives little hope that it admits a (complete) polynomial-time algorithm.
SAT solvers have made significant strides in efficiency and capability through optimizations, simplification of formulas, and heuristics. However, certain classes of seemingly easy problems remain challenging even for the state-of-the-art SAT solvers.

Recently, researchers have proposed \emph{orthologic} as a complementary tool for reasoning in propositional logic  \cite{DBLP:conf/cav/GuilloudBMK23,guilloudOrthologicAxiomsErrata2024}. Orthologic is a weakening of classical logic that relaxes the distributive law of conjunction and disjunction, making it a sound but incomplete approximation of classical logic. The main interest in orthologic stems from its good algorithmic properties: it is possible to decide if a formula is true or false in quadratic time, in contrast to the (co)NP-completeness of the corresponding problems for classical logic. 
Concretely, orthologic admits a quadratic-time normalization procedure and an $\mathcal{O}(n^2|A|)$-time algorithm for the entailment problem (deciding validity under a set of ground assumptions)\cite{guilloudOrthologicAxiomsErrata2024}. 

These algorithmic properties suggest that orthologic can be used as an efficient and predictable building block for reasoning systems. This approach has already been implemented in the domains of program verifiers \cite{DBLP:conf/cav/GuilloudBMK23} and of proof assistants \cite{guilloudLISAModernProof2023}, where predictability is especially valuable. In this paper we argue that there are propositional satisfiability problems that fall in the domain of orthologic but which are non-trivial for existing state-of-the-art SAT solvers, suggesting that orthologic-based reasoning can become a useful technique in the toolbox of SAT solvers. 
This view is further supported by the close correspondence between orthologic and established circuit simplification techniques. For example, 15 of the 16 local rewriting rules for And-Inverter Graphs (AIGs) proposed by Brummayer and Biere~\cite{brummayerLocalTwoLevelAndInverter2006} are subsumed by orthologic reasoning. Moreover, orthologic provides a notion of formula equivalence—OL-equivalence—for which detecting equivalent sub-circuits can be decided in quadratic time, offering an efficient alternative to the congruence-closure-based approaches used in state-of-the-art preprocessors~\cite{biereClausalCongruenceClosure2024}.

Practical challenges in the usage of orthologic have existed in the past. While quadratic time-algorithms for the word problem (which is a special case of deciding if two expressions are equivalent without non-logical axioms) were known for longer \cite{brunsFreeOrtholattices1976}, known algorithms for the entailment problem had polynomial coefficients that were too high for practical use (for example, $\mathcal O(n^7)$ \cite{DBLP:journals/sLogica/EglyT03}). An $\mathcal{O}(n^2(1+|A|))$ algorithm for entailment in orthologic (where $|A|$ is the number of assumptions) was presented in \cite{guilloudOrthologicAxiomsErrata2024} based on a proof search procedure in a variant of sequent calculus, but the initially-published backward proof-search algorithm contained a subtle error related to memoization that made it incomplete. A different algorithm, reducing orthologic proof search to solving a system of Horn clauses, was described as a replacement \cite{guilloudOrthologicAxiomsErrata2024}. The Horn clause reduction has significant issues for practical applicability to large problems. Indeed, in the reduction, one literal is created for every pair of subformulas of the input, representing the implication $\phi \vdash \psi$.  The Horn clauses, of which there are at most $\mathcal{O}(n^2|A|)$, correspond to the inference rules of the proof system. Since systems of Horn clauses can be solved in linear time, the result follows. The major practical issue is that, before even trying a single deduction of new inequalities, the algorithm generates these $\mathcal{O}(n^2|A|)$ clauses. Consequently, the algorithm will always have worst-case behaviour, even if the problem turns out to have a simple proof with a linear or constant number of deductions (for example if the problem contains many irrelevant assumptions). 

In the present work, we address these shortcomings with a new algorithm for deciding the entailment problem, and explore further connections between orthologic and SAT solving. The new algorithm is complete and achieves the best-known theoretical worst-case complexity, while avoiding the pre-computation of all the possible inferences between orthologic inequalities (corresponding to Horn clauses in the original algorithm mentioned above).

We then present a family of valid propositional formulas that orthologic can certify efficiently but which remain challenging for SAT solvers. For any formula $\phi$, the equivalence $\phi \leftrightarrow \NFOL(\phi)$ is a tautology (where $\NFOL(\phi)$ denotes the orthologic normal form of $\phi$), yet its Tseitin encoding yields an unsatisfiable SAT instance on which state-of-the-art solvers such as Kissat time out for large arithmetic circuits. We apply this construction to a benchmark of circuits computing arithmetic functions and show that our new algorithm for orthologic entailment solves them efficiently whereas Kissat fails on a significant fraction. Since orthologic proofs are at most quadratic in size, this means that short proofs exist but are beyond the reach of SAT solver heuristics. This construction thus suggests a new way to generate synthetic benchmarks that are unsatisfiable yet hard for SAT solvers.

Finally, we study the applicability of orthologic normalization as a preprocessing tool for deciding satisfiability of Boolean circuits. Using the orthologic normalization function, we can simplify a formula to the smallest \OL-equivalent one. This idea was previously suggested and implemented in the context of program verification with the Stainless verifier \cite{DBLP:conf/cav/GuilloudBMK23,HamzaETAL19SystemFR}, but had not been evaluated in the context of SAT solving.
We present experimental results comparing the time taken by SAT solvers on a set of standard public benchmarks \cite{velev,BiereAigerBenchmarks,tjunttil} with and without orthologic normalization. Our results show that in the vast majority of benchmarks, orthologic normalization makes the resulting problem faster to solve.
Some problems that time out without the normalization also become solvable after the normalization. These results confirm that orthologic-based reasoning can positively contribute to practical SAT solving.

The contributions of this paper are as follows:
\begin{itemize}
    \item We present (\autoref{sec:proofsearch}) a new algorithm for the entailment problem in orthologic that maintains the theoretical best-known $\mathcal O(n^2(1+|A|))$ time complexity but in contrast with previously-described implementations avoids a costly preprocessing phase, allowing for further optimizations in the deduction of new inequalities. We also show that our improved algorithm is orders of magnitude faster than the previously described Horn-clause-based implementation.
    \item We then show (\autoref{sec:olBenchmarks}) a new way to generate synthetic benchmarks from Boolean circuits that are difficult even for state-of-the-art SAT solvers, but which have short (at most quadratic) proofs in orthologic.
    \item We evaluate (\autoref{sec:preprocessing}) the impact of orthologic normalization on SAT solving. We show using large published benchmarks that running a SAT solver on the OL-normalized version of a circuit can improve runtime by up to 28\%.
\end{itemize}

\smartparagraph{Further Related Work.}
Simplification of Boolean circuits as a preprocessing technique for SAT solving has been studied on many occasions. We highlight a selection. 
\cite{francesdemasBinaryImplicationHypergraphs2024} describes an equivalence-preserving, size-reducing simplification rule based on a hypergraph representation. \cite{brummayerLocalTwoLevelAndInverter2006} present a set of rules for optimizing circuits represented as And-Inverter Graphs (AIGs). The rules are local in the sense that they are all rewrite rules of depth at most two. This is in contrast with lattice and ortholattice-based simplification, which cannot be represented as a complete rewriting system.
\cite{guilloudEquivalenceCheckingOrthocomplemented2022} use a term rewriting system based on orthocomplemented bisemilattices, an algebraic variety related to ortholattices but strictly weaker, to devise a linear-time simplification algorithm for Boolean circuits. 
\cite{bjesseDAGawareCircuitCompression2004} present and evaluate a fast (linear-time) compression algorithm using auto-compressing representation of circuits that merges features of other preexisting optimizations.
\cite{biereClausalCongruenceClosure2024} uses congruence closure to detect isomorphic sub-circuits, and eliminates the corresponding nodes. This reduces the total size of the problem after clausification and improves SAT-solvers performances.

The word problem for ortholattices was first solved in \cite{brunsFreeOrtholattices1976}. The proof system of orthologic on which we base our improved algorithm was first presented in \cite{schultemontingCutEliminationWord1981}. The cut elimination theorem was formalized in Rocq in \cite{guilloudVerifiedOptimizedImplementation2025}. The normalization algorithm we study was presented in \cite{DBLP:conf/cav/GuilloudBMK23}. Other treatments of orthologic proof systems and proof search include \cite{kawanoLabeledSequentCalculus2018, laurentFocusingOrthologic2016}, but with much worse complexity results.

\section{Preliminaries}
\label{sec:preliminaries}

We next introduce the key concepts and the notation.

\begin{defin}
    \textbf{Ortholattices} are algebras with signature $(\land^2, \lor^2, \neg^1, \bot^0, \top^0)$, where superscript indicates arity, that satisfy the laws given in \autoref{tab:axVariety}. In particular, the class of all ortholattices is a variety.

    Ortholattices are in particular \textit{lattices}, and hence admit a natural order given by
    $
    a \leq b \iff a \land b = a
    $,
    which is also provably equivalent to $a \lor b = b$. 
\end{defin}

\begin{table}[hbt]
\centering
    \begin{tabular}{ c @{\hskip 2em} | @{\hskip 2em} c }
        $x\vee y=y\vee x$&$x\wedge y=y\wedge x$\\
        $x\vee(y\vee z)=(x\vee y)\vee z$&$x\wedge(y\wedge z)=(x\wedge y)\wedge z$\\
        $x\vee x=x$&$x\wedge x=x$\\
        $x\vee\top=\top$&$x\wedge\bot=\bot$\\
        $x\vee\bot=x$&$x\wedge\top=x$\\
        $\neg\neg x=x$& \\
        $x\vee\neg x=\top$&$x\wedge\neg x=\bot$\\
        $\neg(x\vee y)=\neg x\wedge\neg y$&$\neg(x\wedge y)=\neg x\vee\neg y$\\
        $x\vee(x\wedge y)=x$&$x\wedge(x\vee y)=x$
    \end{tabular}
    \caption{Axiomatization of ortholattices, algebras with signature $(\land^2, \lor^2, \neg^1, \bot^0, \top^0)$}
    \label{tab:axVariety}
\end{table}

Ortholattices are a weakening of Boolean algebras. Boolean algebras are precisely distributive ortholattices, meaning satisfying the two distributivity laws:
\begin{equation*}
\begin{array}{c}
a \lor (b \land c) = (a \lor b) \land (a \lor c) \\
a \land (b \lor c) = (a \land b) \lor (a \land c)
\end{array}
\end{equation*}

\subsection{Orthologic Proof System}

Classical logic is the logic whose Tarski-Lindenbaum algebra is Boolean algebra. Similarly, orthologic is the logic whose algebra is that of ortholattices. 
\begin{defin}
    Let $X$ be some countably infinite set of variable symbols. We denote by $\term{X}{\OL}$ the set of terms built from $X$ and the signature of ortholattices. Note that this is the same set of expressions as propositional formulas in classical logic, so we sometimes call elements of $\term{X}{\OL}$ \textit{formulas}.

    We note $\phi \simw{\OL} \psi$ if both $\phi \leq \psi$ and $\psi \leq \phi$ hold in all ortholattices.
\end{defin}
Orthologic admits a simple and natural proof system, similar to sequent calculus for classical and intuitionistic logic. 
It is folklore that by restricting the sequent calculus LK for classical logic so that sequents only have at any point in a proof at most one formula on the right side, we obtain a proof system (named LJ) for intuitionistic logic. A similar restriction characterizes orthologic: Sequents can never contain more than two formulas in total (but can contain two on the right).
\begin{defin}
    If $\phi \in \term{X}{\OL}$ is a formula, then we call $\phi^L$ and $\phi^R$ annotated formulas. An orthologic \textit{sequent} is a pair of annotated formulas $(\Gamma, \Delta)$. The set of rules of the orthologic proof system is given in \autoref{fig:olProofSystem}.
\end{defin}

Because of the restriction that sequents contain at most two formulas, it is convenient to use annotated formulas rather than a pair of sets. $\phi^L$ denotes that $\phi$ is left of the sequent, and similarly $\psi^R$ denotes that $\psi$ is on the right. Hence, $\phi^L, \psi^R$ denotes the sequent $\phi \vdash \psi$ in more conventional notation.
In practice, for implementation, it is convenient to consider orthologic sequents as containing exactly two formulas rather than at most two: A sequent with a single annotated formula $\Gamma$ can be written as $\Gamma, \Gamma$ while the empty sequent can be (for example) $\bot^R, \bot^R$.

\begin{figure}[htbp]
    \begin{framed} 
        \[
            \inferrule*[Right=(Cut)]{\Gamma,\phi^R\\\phi^L,\Delta}{\Gamma,\Delta}
        \]
        \begin{align*}
        \inferrule*[Right=(Hyp)]{ }{\phi^L,\phi^R}
        &\hspace{8em}
        \inferrule*[Right=(Replace)]{\Gamma, \Gamma}{\Gamma,\Delta}\\
        \inferrule*[Right=($\wedge$-$L$)]{\phi^L,\Delta}{(\phi\wedge\psi)^L,\Delta}
        &\hspace{8em}
        \inferrule*[Right=($\wedge$-$R$)] 
        {\Gamma,\phi^R\\\Gamma,\psi^R}
        {\Gamma,(\phi\wedge\psi)^R}\\
        \inferrule*[Right=($\vee$-$L$)]
        {\phi^L,\Delta\\\psi^L,\Delta}
        {(\phi\vee\psi)^L,\Delta}
        &\hspace{8em}
        \inferrule*[Right=($\vee$-$R$)]
        {\Gamma,\phi^R}{\Gamma,(\phi\vee\psi)^R}\\
        \inferrule*[Right=($\neg$-$L$)]
        {\Gamma,\phi^R}{\Gamma,(\neg\phi)^L}
        &\hspace{8em}
        \inferrule*[Right=($\neg$-$R$)]
        {\phi^L,\Delta}{(\neg\phi)^R,\Delta}
    \end{align*}
        \caption{Proof system for propositional orthologic with axioms}
        \label{fig:olProofSystem}
    \end{framed}
\end{figure}
\begin{theorem}[Soundness and completeness \cite{guilloudOrthologicAxiomsErrata2024}]
\label{thm:soundComplete}
    A sequent $\phi^L, \psi^R$ is provable in orthologic if and only if the inequality $\phi \leq \psi$ holds in every ortholattice.
\end{theorem}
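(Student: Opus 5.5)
Soundness and completeness split into two directions; I treat each separately. To make sense of provability for arbitrary sequents, I interpret sequents in an ortholattice $L$ under a valuation $v$. A sequent $\Gamma,\Delta$ is read as the inequality $\top \leq \llbracket\Gamma\rrbracket \lor \llbracket\Delta\rrbracket$. Here $\llbracket\phi^R\rrbracket = v(\phi)$ and $\llbracket\phi^L\rrbracket = \neg v(\phi)$. For $\phi^L,\psi^R$ this gives $\top \leq \neg\phi\lor\psi$, which is not in general equivalent to $\phi\leq\psi$ in a non-distributive lattice. So I choose the interpretation per case:
\begin{itemize}
\item $\phi^L,\psi^R$ is read as $\phi\leq\psi$;
\item $\phi^R,\psi^R$ is read as $\neg\phi\leq\psi$;
\item $\phi^L,\psi^L$ is read as $\phi\leq\neg\psi$.
\end{itemize}
Using $\neg\neg x=x$ and the De Morgan laws, these are symmetric (the order of the two formulas does not matter) and match under moving a formula across sides with negation. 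Thus every sequent is uniformly "$\neg a \le b$" with $a,b$ the right-read values.

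\textbf{Soundness.} Proceed by induction on derivations, checking that each rule of \autoref{fig:olProofSystem} preserves validity in every ortholattice under every valuation.
\begin{itemize}
\item (Hyp) is reflexivity.
\item (Cut) is transitivity: from $\neg\gamma\le\phi$ and $\neg\phi\le\delta$, contraposition gives $\neg\delta\le\phi$, and we need $\neg\gamma\le\delta$. Note $\neg\phi\le\delta$ means $\neg\delta\le\phi$, so what is really needed is the case analysis on sides. Written with the $\le$ forms, each case reduces to $x\le y\le z$.
\item The $\land$/$\lor$ rules follow from $\phi\land\psi\le\phi$, the meet/join universal properties, and absorption.
\item The $\neg$ rules are immediate from the symmetric reading.
\item (Replace) uses that $\Gamma,\Gamma$ means $\neg\gamma\le\gamma$, i.e.\ $\gamma=\top$ (since $\neg\gamma=\neg\gamma\land\gamma=\bot$). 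Then $\neg\delta\le\top=\gamma$ holds trivially.
\end{itemize}
Specializing to $\phi^L,\psi^R$ gives $\phi\le\psi$.

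\textbf{Completeness.} I use a Lindenbaum–Tarski construction. Define $\phi\preceq\psi$ iff $\phi^L,\psi^R$ is provable. This relation is a preorder by (Hyp) and (Cut). Let $\equiv$ be its symmetrization; I show $\equiv$ is a congruence and the quotient is an ortholattice.
\begin{itemize}
\item \emph{Meets.} $\phi\land\psi$ is the greatest lower bound, by ($\land$-L) and ($\land$-R) together with Cut. Joins are dual.
\item \emph{Negation.} $\neg$ is order-reversing: from $\phi^L,\psi^R$, apply ($\neg$-L) and ($\neg$-R) to get $(\neg\psi)^L,(\neg\phi)^R$. The rules give $\neg\neg\phi\equiv\phi$. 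Hence the De Morgan laws hold, being consequences of an involutive antitone map on a lattice.
\item \emph{Bounds.} $\bot$ and $\top$ are least and greatest, which needs the conventions for the constants (derivable via Replace or as $\phi\land\neg\phi$-style encodings).
\item \emph{Excluded middle.} The key law is $x\land\neg x\le\bot$, equivalently $\top\le x\lor\neg x$. For it, derive $\phi^R,(\neg\phi)^R$ from (Hyp) by ($\neg$-R). Then ($\lor$-R) twice on the two sides gives $(\phi\lor\neg\phi)^R,(\phi\lor\neg\phi)^R$, and (Replace) yields $\Delta,(\phi\lor\neg\phi)^R$ for any $\Delta$, in particular $\top^L$.
\end{itemize}
The quotient is therefore an ortholattice. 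Take the valuation sending each variable to its class; each term then evaluates to its own class. If $\phi\le\psi$ holds in every ortholattice, it holds in this one, so $\phi\preceq\psi$ and the sequent is provable.

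\textbf{Main obstacle.} The main obstacle is the bookkeeping for two-formula sequents. First, the interpretation must be fixed so that every rule, especially Cut on differently-sided formulas and Replace, is sound without distributivity. Second, in completeness, derived sequents like $\phi^R,\psi^R$ must be related to $\preceq$. I would first prove a lemma that $\Gamma,\Delta$ is provable iff the "flipped" sequent (moving a formula across with a negation) is, using the $\neg$ rules and $\neg\neg$. This reduces everything to $L,R$ sequents.
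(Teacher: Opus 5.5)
Your proposal is correct. It follows the standard argument behind the cited result (the paper itself gives no proof, only the citation): soundness by induction on derivations under the shape-dependent reading ``$\neg a\le b$'', and completeness via the Lindenbaum--Tarski quotient, with (Replace) supplying $\chi\preceq\phi\lor\neg\phi$.

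Two small fixes. In the Cut case, your own reading makes the premise $\phi^L,\Delta$ mean $\phi\le\delta$, not $\neg\phi\le\delta$, so soundness is simply $\neg\gamma\le\phi\le\delta$. The flipping lemma you list as the main obstacle is unnecessary, since every completeness derivation you give already ends in an $L,R$ sequent.
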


Even though it is strictly weaker than classical logic, orthologic allows proving a number of desirable properties, including all laws of bounded lattices, such as reordering and duplicating conjuncts and disjuncts, constant folding of $\bot$ and $\top$, and all laws necessary to compute the negation normal form (nnf) of a formula.

Note that by the laws of ortholattices, $\bot$ and $\top$ can always be written as respectively $x \land \neg x$ and $x \lor \neg x$, for any variable $x$. Hence, for simplicity, we may sometimes omit cases relative to $\bot$ and $\top$ in proofs.

Orthologic admits strong algorithmic properties: In particular, the proof system of \autoref{fig:olProofSystem} admits a polynomial-time decision procedure. Formally, we are interested in the following three kinds of problems:
\begin{defin}
    The \emph{word problem} for ortholattices consists in deciding whether a given inequality $\phi \leq \psi$ holds in all ortholattices. This is equivalent to $\phi^L, \psi^R$ being provable. 
    The \emph{entailment problem} (also known as generalized word problem) consists in deciding, given an inequality $\phi \leq \psi$ and a set of inequalities $A$, if $\phi \leq \psi$ holds in \textit{all ortholattices which satisfy all the inequalities of $A$}. Elements of $A$ are called non-logical axioms. The entailment problem is equivalent to deciding if $\phi^L, \psi^R$ is provable in orthologic with the additional rules for every $(\Gamma, \Delta) \in A$:
    $$
    \inferrule*[Right=(Ax)]{\phantom{x}}{\Gamma,\Delta}
    $$

    A \textit{normalization function} is a function $f: \term{X}{\OL} \to \term{X}{\OL}$ such that $\phi \simw{\OL} \psi$ if and only if $f(\phi) = f(\psi)$. 
\end{defin}

\begin{theorem}
The word problem for ortholattices is solvable in time $\mathcal{O}(n^2)$ \cite{brunsFreeOrtholattices1976}. 
The entailment problem is solvable in time $\mathcal{O}(n^2|A|)$ \cite{guilloudOrthologicAxiomsErrata2024}. 
There exists a normalization function for ortholattices computable in time $\mathcal{O}(n^2)$ \cite{DBLP:conf/cav/GuilloudBMK23}. Given a formula $\phi$, let $\NFOL(\phi)$ denote its normal form. Additionally, this function maps every formula to the smallest (by number of AND and OR connectives) formula in its equivalence class.
\end{theorem}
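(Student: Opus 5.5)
The statement bundles three known results, so I would prove it by assembling cited arguments around one common tool: cut elimination for the system of \autoref{fig:olProofSystem}, combined with soundness and completeness (\autoref{thm:soundComplete}).

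\textbf{Cut elimination and the subformula property.} I would first show that Cut is admissible. Without axioms it can be removed entirely. With axioms from $A$, every cut can be pushed so that the cut formula is a subformula of the goal or of some axiom. Consequently, every sequent appearing in a proof consists of at most two annotated formulas drawn from the set $S$ of subformulas of the goal and of $A$. If the total size is $n$, this gives only $\mathcal O(n^2)$ possible sequents. This is the step I expect to be the main obstacle: with non-logical axioms the usual cut-reduction argument has to be adapted, and the two-formula restriction, including the Replace rule, must be preserved throughout the reduction. I would follow the route of \cite{schultemontingCutEliminationWord1981}, whose argument was formalized in \cite{guilloudVerifiedOptimizedImplementation2025}.

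\textbf{Word and entailment problems.} For the word problem, provability becomes a least fixpoint over the $\mathcal O(n^2)$ candidate sequents. Each logical rule has at most two premises, and each sequent is the conclusion of $\mathcal O(1)$ rule instances determined by the outermost connectives of its two formulas. Writing one Horn clause per rule instance and solving the resulting system by unit propagation in linear time therefore gives $\mathcal O(n^2)$. For entailment, the restricted cuts against axioms add, for each axiom and each pair position, $\mathcal O(n^2)$ further clauses. This yields $\mathcal O(n^2|A|)$, and by completeness it decides validity in all ortholattices satisfying $A$.

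\textbf{Normalization.} For the normalization function I would follow \cite{DBLP:conf/cav/GuilloudBMK23}. First compute the negation normal form. Then recursively normalize the children, and within each conjunction or disjunction remove any child that is $\leq$-absorbed by a sibling, using the memoized word-problem test just established. Also detect the patterns $x \land \neg x$ and $\top$/$\bot$ collapse via the leq test, and flatten children of the same connective. Sharing the memo table across all calls keeps the total cost at $\mathcal O(n^2)$. I would then show that the resulting form is unique per $\simw{\OL}$ class, by proving that two normal forms are equivalent only if they are syntactically equal up to child reordering, which I would fix with a canonical ordering. This uniqueness argument also establishes minimality: any equivalent formula can be reduced by the same absorptions without ever growing, so it has at least as many connectives as the normal form.
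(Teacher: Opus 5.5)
Your word-problem and entailment parts are correct and coincide with the paper's HornSAT reduction. The paper only cites Bruns for the word problem, and taking $A=\emptyset$ in the reduction is a legitimate substitute. One correction: state cut elimination in the paper's stronger partial form, where the cut formula is itself one of the axiom formulas. Your phrasing, ``a subformula of the goal or of some axiom'', gives the subformula property but allows $\mathcal O(n)$ cut formulas per sequent, hence $\mathcal O(n^3)$ Cut clauses. Your clause count tacitly relies on the stronger version.

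The genuine gap is in the normalization part. Dropping a child only when it is $\le$-absorbed by a \emph{single} sibling does not produce a normal form. Take $\phi = a\wedge d\wedge((a\wedge d)\vee c)$.
\begin{itemize}
\item Its flattened children $a$, $d$, $(a\wedge d)\vee c$ are already in your normal form.
\item No child lies below another: each such inequality already fails in the two-element Boolean algebra. For example, $a\le(a\wedge d)\vee c$ fails at $a=1$, $c=d=0$.
\item $\phi$ is neither $\bot$ nor $\top$.
\end{itemize}
So your procedure returns $\phi$ unchanged. Yet $a\wedge d\le(a\wedge d)\vee c$ holds in every lattice, so $\phi$ is OL-equivalent to $a\wedge d$, which your procedure returns as itself. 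Hence your map is not a normalization function. The uniqueness claim (``equivalent normal forms coincide up to reordering'') is false for your notion of normal form. The minimality argument you derive from uniqueness fails with it, since $\phi$ has strictly more connectives than $a\wedge d$.

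At a minimum, the redundancy test must compare a child with the meet of all its siblings. Successively remove $\phi_j$ from $\bigwedge_i\phi_i$ when $\bigwedge_{i\neq j}\phi_i\le\phi_j$, and dually remove $\phi_j$ from a join when $\phi_j\le\bigvee_{i\neq j}\phi_i$. Detect $\bot$ with the decision procedure, for example by testing $\phi\le\neg\phi$. Syntactic $x\wedge\neg x$ patterns among children are not enough: they already miss $a\wedge b\wedge(\neg a\vee\neg b)$ after NNF.

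Even with that repair, two things remain unproved in your sketch.
\begin{itemize}
\item Uniqueness is where the real work lies. It needs a Whitman/Bruns-style characterization of when a meet of normal forms lies below a join of normal forms in free ortholattices. This is exactly what the cut-free calculus provides.
\item The $\mathcal O(n^2)$ bound no longer follows from sharing a memo table over pairs of input subformulas. The $\le$ queries now involve freshly built meets and joins of sibling sets, so the cost needs a separate accounting.
\end{itemize}
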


The word problem is a special case of the entailment problem (take $A = \emptyset$) and can also be decided using a normalization function (check whether $f(\phi) = f(\psi)$). For practical applications, supporting non-logical axioms allows reasoning with additional knowledge, and normalization allows progressing on formulas even when they don't entirely fall in the domain of orthologic. 

The existence of an $\mathcal{O}(n^2|A|)$ algorithm for the entailment problem in ortholattices follows from the following key property.
\begin{theorem}[Partial Cut Elimination Theorem \cite{guilloudOrthologicAxiomsErrata2024}]
If a sequent $(\Gamma, \Delta)$ has a proof in orthologic with axioms in $A$, then it has a proof where in all the instances of the \texttt{Cut} rule, the cut formula $\phi$ is in one of the axioms. In particular, if the set of axioms is empty, then a sequent has a proof if and only if it has a proof without the \texttt{Cut} rule.
\end{theorem}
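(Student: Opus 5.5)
We prove the Partial Cut Elimination Theorem by a standard cut-reduction argument adapted to two-formula sequents. Call a cut \emph{analytic} if its cut formula $\phi$ occurs (as an annotated formula, up to side) in one of the axioms of $A$, and \emph{non-analytic} otherwise. We show that any proof containing a non-analytic cut can be transformed into one where every cut is analytic. The induction measure is lexicographic: first the size of the cut formula of a topmost non-analytic cut, then the sum of the heights of its two premise derivations. We repeatedly eliminate a topmost non-analytic cut, i.e.\ one whose premises $\Gamma,\phi^R$ and $\phi^L,\Delta$ have derivations containing only analytic cuts. The empty-axiom case follows immediately, since then no cut formula can be in an axiom.

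The core is a case analysis on the last rules of the two premise derivations.

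\textbf{Hyp on either side.} If a premise is \textsc{Hyp}, the cut is trivial and the other premise already proves the conclusion.

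\textbf{Principal cut formula on both sides.} If $\phi$ is principal on both sides, introduced by matching right and left logical rules, we reduce to cuts on immediate subformulas, which are smaller. For $\wedge$, cut $\Gamma,\phi_1^R$ against $\phi_1^L,\Delta$. For $\vee$, reduce symmetrically. For $\neg$, a premise $\Gamma,\phi^R$ yields $\Gamma,(\neg\phi)^L$, and $(\neg\phi)^R,\Delta$ comes from $\phi^L,\Delta$. The new cut on $\phi$ is smaller, or analytic.

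\textbf{Cut formula not principal on one side.} Here the last rule on that side acts on the other formula, e.g.\ $\Gamma$ in $\Gamma,\phi^R$. We permute the cut upward: apply the cut to the premise(s) of that rule, then reapply the rule. Each new cut has smaller total height. Two sequent shapes need care.
\begin{itemize}
\item If the last rule is \textsc{Replace} ($\Gamma,\Gamma \Rightarrow \Gamma,\phi^R$), then $\phi$ was weakened in. The conclusion $\Gamma,\Delta$ follows from $\Gamma,\Gamma$ by \textsc{Replace}, with no cut at all. If \textsc{Replace} instead introduces $\Gamma$ from $\phi^R,\phi^R$, we cut $\phi^R,\phi^R$ with $\phi^L,\Delta$ to get $\Delta,\phi^R$ (height decreases). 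We then cut again with $\phi^L,\Delta$ to get $\Delta,\Delta$, and finish with \textsc{Replace}. Both cuts are on $\phi$ and each has a premise of smaller height.
\item The cut formula may also occur in both slots, as in $\phi^R,\phi^R$; this is handled by the same duplication trick.
\end{itemize}

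\textbf{Axiom on one side.} The delicate case, and the main obstacle, is when a premise is an instance of \textsc{Ax}. If $\phi$ occurs in that axiom, the cut is already analytic and we leave it. Otherwise the axiom is some $\Gamma,\phi^R$ with $\phi$ in it, so $\phi$ does occur in an axiom and the cut is analytic. The real difficulty is ensuring that permutations never produce a non-analytic cut above an analytic one. This is why we must pick a topmost non-analytic cut and make the measure account only for non-analytic cuts. Analytic cuts commute with the transformations unchanged, since their cut formulas are never altered.

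A secondary subtlety is the case where both premises end in rules acting on the side formulas with $\phi$ principal on neither side. Permuting on one side suffices, and we fix the choice (say the left premise first) to keep the measure well-founded.
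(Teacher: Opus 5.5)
The paper gives no proof of this theorem; it is imported from the cited work. Your argument therefore has to stand on its own.

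Its skeleton is the standard anchored-cut elimination: eliminate a topmost non-analytic cut by induction on (size of cut formula, sum of premise heights), leaving analytic cuts untouched. Your principal, \textsc{Hyp}, weakening-of-$\phi$ and permutation cases, including permuting past an analytic cut, are fine.

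The axiom case is not the obstacle you describe. A premise that is an instance of \textsc{Ax} literally contains the cut formula, so a non-analytic cut never has an axiom premise; your paragraph on this contradicts itself before reaching that conclusion. The real obstacle is contraction, and that is where your proof breaks.

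Take a left premise $\Gamma,\phi^R$ obtained by \textsc{Replace} from the singleton $\phi^R,\phi^R$ (derivation $d_0$), and a right premise $e$ of $\phi^L,\Delta$.
\begin{itemize}
\item Your first cut, of $d_0$ against $e$, is fine. The induction hypothesis turns it into some analytic derivation $f$ of $\phi^R,\Delta$.
\item Your second cut, of $f$ against $e$, is again on the same non-analytic $\phi$. Its right premise is the unchanged $e$, and the height of $f$ is not bounded by anything in the original proof, since cut elimination does not preserve height.
\item So $(|\phi|, h(f)+h(e))$ need not be smaller, and the induction hypothesis cannot be invoked. Your claim that each cut has a premise of smaller height is false for this second cut, and in any case your measure is the sum of heights.
\end{itemize}
The same problem arises dually for $\phi^L,\phi^L$ on the right, and when both premises are weakenings of singletons. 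This is exactly Gentzen's contraction problem.

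You need an argument in which the second cut is never on $\phi$ again. One option is a Mix/multicut formulation that deletes all copies of the cut formula, with a rank-based measure. The other is a dedicated treatment of the singleton case: permute into the right premise when its last rule acts on $\Delta$, and otherwise inspect the last rule of $d_0$ (and of the singleton derivation on the right, if $e$ is itself such a weakening). Two examples of the second option:
\begin{itemize}
\item If $d_0$ ends in ($\wedge$-$R$) with premises $\phi^R,\phi_1^R$ and $\phi^R,\phi_2^R$, and $e$ ends in ($\wedge$-$L$) from $\phi_1^L,\Delta$: cut $\phi^R,\phi_1^R$ against all of $e$ (smaller height) to get $\phi_1^R,\Delta$. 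Then cut on the strictly smaller $\phi_1$ to get $\Delta,\Delta$, and weaken.
\item If $d_0$ ends in an analytic cut on $\chi$: cut both of its premises against $e$, then recombine them with a cut on $\chi$, which is analytic and may stay.
\end{itemize}
Without such a step your induction is not well-founded.
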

Since the \texttt{Cut} rule is the only one that can remove a formula from a proof, partial cut elimination implies the following important corollary.
\begin{corollary}[Subformula Property]
If a sequent $(\Gamma, \Delta)$ has a proof from axioms in $A$, then it has a proof which only uses subformulas of $\Gamma, \Delta$ and of the axioms in $A$.
\end{corollary}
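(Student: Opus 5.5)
The plan is to take a proof of $(\Gamma,\Delta)$ from the axioms in $A$ and first replace it, using the Partial Cut Elimination Theorem, by a proof in which every instance of \texttt{Cut} has a cut formula $\phi$ occurring in one of the axioms of $A$. For the rest of the argument we fix this normalized proof. We then show, by induction on its structure from the root upwards, that every formula appearing in any sequent of the proof is a subformula of $\Gamma$, of $\Delta$, or of some axiom in $A$. Writing $S$ for this set of subformulas, the claim becomes: every sequent in the tree has both of its annotated formulas in $S$.

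The induction hypothesis is that a given node, the conclusion of some rule, has both formulas in $S$. The base case is the root $(\Gamma,\Delta)$, which holds trivially. The step inspects each rule of \autoref{fig:olProofSystem} read bottom-up and checks that each premise only contains formulas that are either in the conclusion or immediate subformulas of formulas in the conclusion.

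\textbf{Logical rules.} For the rules ($\wedge$-$L$), ($\wedge$-$R$), ($\vee$-$L$), ($\vee$-$R$), ($\neg$-$L$) and ($\neg$-$R$), the premises contain the side formula $\Gamma$ or $\Delta$ unchanged, together with $\phi$ or $\psi$, which are immediate subformulas of the principal formula. Since $S$ is closed under taking subformulas, the premises stay in $S$.

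\textbf{Replace.} This rule only drops a formula when read upwards: its premise $\Gamma,\Gamma$ uses only $\Gamma$.

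\textbf{Leaves.} (Hyp) and (Ax) have no premises, so there is nothing to check.

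\textbf{Cut.} This is the one rule that introduces a new formula, namely the cut formula $\phi$, which appears in both premises. By the choice of the normalized proof, $\phi$ occurs in some axiom of $A$, so $\phi\in S$. The remaining formulas in the premises, $\Gamma$ and $\Delta$, come from the conclusion.

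The only delicate point is the \texttt{Cut} case, and it is handled entirely by the Partial Cut Elimination Theorem. One small check is that ``the cut formula is in one of the axioms'' is read as $\phi$ being one of the (annotated) formulas of an axiom sequent, hence a subformula of that axiom; under either reading $\phi\in S$.

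For the conventions that represent a one-formula sequent as $\Gamma,\Gamma$ and the empty sequent as $\bot^R,\bot^R$, the constant $\bot$ may appear. By the earlier remark that $\bot$ and $\top$ can be expressed via $x\wedge\neg x$ and $x\vee\neg x$, or simply by treating constants as always allowed, this does not affect the claim. With these checks in place the induction closes and the corollary follows.
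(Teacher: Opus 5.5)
Your proposal is correct and takes essentially the same approach as the paper. The paper justifies the corollary in one line: \texttt{Cut} is the only rule that can remove a formula, so Partial Cut Elimination suffices. Your rule-by-rule induction from the root is simply the detailed version of that observation.
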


\smartparagraph{Reduction to HornSAT.}
The subformula property yields the following reduction to satisfiability of propositional Horn clauses \cite{DOWLING1984267}.  Observe that there exists at most $\mathcal{O}(n^2)$ different sequents that can be built from the subformulas of $\Gamma, \Delta$ and of the axioms in $A$. Create a propositional variable for each such sequent. Create a Horn clause for every instance of a deduction rule in \autoref{fig:olProofSystem}. Every sequent can be deduced in a constant number of different ways for every rule but the \texttt{Cut} rule, and it can be deduced using the \texttt{Cut} rule in as many ways as there are possible cut formulas, that is $|A|$. Hence, there are at most $\mathcal{O}(n^2|A|)$ total Horn clauses. Systems of Horn clauses can be solved in linear time \cite{DOWLING1984267}, so it follows that entailment in orthologic can be solved in time $\mathcal{O}(n^2|A|)$.

\section{Better Proof Search Algorithm For OL with Axioms}
\label{sec:proofsearch}
The Horn-clause-based algorithm for the entailment problem has a significant practical issue: It always precomputes $\mathcal{O}(n^2|A|)$ Horn clauses before making a single deduction. Hence, it essentially always hits the worst case complexity, even if the problem happened to have a one-step proof!
This makes the algorithm rather impractical and slow in practice. It also prevents us from applying any heuristics on how to deduce new sequents, as the complexity is anyway dominated by the preprocessing phase.
In this section, we present a new algorithm that avoids this issue. Additionally, our new algorithm has only $\mathcal{O}(n^2)$ worst-case space complexity instead of $\mathcal{O}(n^2|A|)$.

\subsection{Improved algorithm for orthologic entailment}
\label{sec:improvedOLAlgo}
First, it is convenient to compute the negation normal form ($\nnf$) of formulas, restricting their syntax. For propositional formulas, negation normal form is easy to compute in linear time. Note that in order for the algorithm to still run in linear time in the presence of structure sharing, we need to memoize the result of $\nnf$ for each subformula.
Moreover, we will need later to compute the negation of a formula in negation normal form. Formally, we define
$$
\begin{array}{l}
\getInverse: \term{X}{\OL} \to \term{X}{\OL}\\
\getInverse(\phi) := \nnf(\neg \phi)
\end{array}
$$

We use $\phi'$ as a shorthand for $\getInverse(\phi)$.

For any formula $\phi \in \term{X}{\OL}$, $\nnf(\phi) \simw{\OL} \phi$, so to decide if a sequent $\phi^{\square}, \psi^{\triangle}$ is provable in $\OL$, it is sufficient to decide if the sequent $\nnf(\phi)^{\square}, \nnf(\psi)^{\triangle}$ is provable. Hence, we can restrict our attention to formulas in negation normal form.
Additionally, by soundness and completeness of the $\OL$ proof system (\autoref{thm:soundComplete}), a sequent $(\phi^L, \Delta)$ is provable if and only if $((\neg\phi)^R, \Delta)$ is provable. Hence, any sequent can be transformed, in linear time, into an equiprovable sequent with two right formulas so that negation only appears right on top of variables.

Hence, by replacing the \textsc{Hyp} rule by the equivalent

\begin{center}
    \AxiomC{}
    \RightLabel{\textsc{Hyp}}
    \UnaryInfC{$x^R, \neg x^R$}
    \DisplayProof
\end{center}
and the \textsc{Cut} rule by
\begin{center}
    \AxiomC{$\phi^R, \gamma^R$}
    \AxiomC{$\getInverse(\gamma)^R, \psi^R$}
    \RightLabel{\textsc{Cut}}
    \BinaryInfC{$\phi^R, \psi^R$}
    \DisplayProof
\end{center}
we can ensure all sequents in a proof only ever have right formulas; in particular we can ignore the rules ($\vee$-$L$), ($\wedge$-$L$), ($\neg$-$L$) and ($\neg$-$R$). This means that our search space is not all pairs of left and right subformulas, but pairs of the subformulas of the $\nnf$ of the input and the $\nnf$ of their negations. From now on, we consider without loss of generality that sequents have only right formulas that are in $\nnf$.
Hence, the rules that we need to consider are \textsc{Ax}, \textsc{Hyp}, \textsc{Cut}, \textsc{Replace}, $\wedge$-$R$, $\vee$-$R$.
\newcommand{\SF}{\mathit{S{\kern-0.1em}F}}
\newcommand{\AF}{\mathit{A{\kern-0.1em}F}}
We denote by $\SF$ the set of subformulas of our problem and by $\AF$ the set of formulas that compose the axioms. By $P$, we denote the set of proven sequents, initially containing only the axioms and hypotheses.

The goal is to design an algorithm that can compute up to $\mathcal{O}(n^2)$ provable sequents in time at most $\mathcal{O}(n^2(1+|A|))$ while limiting preprocessing as much as possible.
To get an intuition for the problem, observe that if there are a constant number of axioms, then our asymptotic time allowance and number of deductions are equal; hence, we may generally only perform a constant number of operations for each deduced sequent, except when the \textsc{Cut} rule is involved.

Suppose we just proved the sequent $(\phi, \psi)$. We want to discover every sequent that follows from it and previously proven sequents by some rule. We explain informally how the adequate data structure for each rule is maintained before giving the full algorithm.

For \textsc{Ax} and \textsc{Hyp}, we simply start the algorithm by adding to the proven list every axiom and every sequent of the form $(x, \neg x)$, for every variable $x$ in the input.

For the \textsc{Cut} rule, that means finding all previously proven sequents that contain either $\getInverse(\phi)$ or $\getInverse(\psi)$. To do so, we maintain a map
$P_{Cut}: (\AF \cup \AF') \to \mathcal P(\SF)$
from axiom formulas to sets of formulas. $P_{Cut}$ satisfies the following invariant:
$$
P_{Cut}(a) = \lbrace b \mid (\getInverse(a), b) \in P \rbrace
$$
When proving $(\phi, \gamma)$, if $\gamma \in \AF$, then it follows that for every formula $\psi \in P_{Cut}(\getInverse(\gamma))$, $(\phi, \psi)$ is provable. Note that it may be that $(\phi, \psi)$ was already deduced from another cut formula, but there are at most $|A|$ such cut formulas.

Then, the ($\wedge$-$R$) rule has two premises.
First, we initialize a map
$$
\SF_{\land}: \SF \to \mathcal P(\SF)
$$
$$
\SF_{\land}(a) = \lbrace b \mid a \land b \in \SF \rbrace
$$

We also maintain a different map
$$
P_{\land}: \SF \to \SF \to \mathcal P(\SF)
$$
with the invariant
$
P_{\land}(b)(k) = \lbrace a\land k \mid (a, b) \in P \text{ and } a \land k \in \SF  \rbrace
$.

When proving a sequent $(a, b)$, for every $k \in \SF_{\land}(a)$, we update
$$
P_{\land}(b)(k)\, \mathop{\texttt{+=}}\, a \land k,
$$
and symmetrically for $b$. Then, when deducing a sequent $(\phi, \psi)$, we can automatically deduce all sequents of the form $(\phi, \gamma)$ for $\gamma \in P_{\land}(\phi)(\psi)$.

The ($\vee$-$R$) rule is even simpler, as it has only one premise. We initialize a map
$$
\SF_{\lor}: \SF \to \mathcal P(\SF)
$$
$$
\SF_{\lor}(a) = \lbrace a \lor b \mid a \lor b \in \SF \rbrace
$$
When deducing a sequent $(\phi, \psi)$, then for all $\gamma \in \SF_{\lor}(\psi)$ we should also deduce the sequent $(\phi, \gamma)$ (and similarly for $\phi$).

For the \textsc{Replace} rule, which also has only one premise, when deducing a sequent $(\phi, \phi)$, we can deduce the sequent $(\psi, \phi)$ for every $\psi \in \SF$.

\autoref{alg:olEntailment} summarizes the algorithm. In it, every data structure we use is the mutable version and \lstinline|X| is the set of variables. Moreover, pairs of formulas are unordered pairs. The following theorem states its correctness.

\begin{algorithm}[p]
\caption{Optimized Orthologic Entailment Algorithm}\label{alg:olEntailment}
\begin{lstlisting}[language=scala, mathescape=true]
type Sequent = (Formula, Formula)
def prove(s: Sequent, axioms: Set[Sequent]): Boolean =
  val subformulas: Set[Formula] = subformulasOf(s) ++ axioms.flatMap(subformulasOf)
  subformulas.toArray.foreach(f => subformulas ++= subformulasOf(getInverse(f)))
  val axiomFormulas: Set[Formula] = 
          axioms.flatMap{ case (a, b) => Set(a, b, getInverse(a), getInverse(b)) }
  val proven: Set[Sequent] = Stack.from(axioms ++ X.map(x => (x, getInverse(x))))
  val worklist: Stack[Sequent] = Stack.from(axioms ++ X.map(x => (x, getInverse(x))))
  val P_Cut: Map[Formula, Set[Formula]] = Map.empty
  val P_and: Map[Formula, Map[Formula, Set[Formula]]] = Map.empty

  val SF$_\lor$: Map[Formula, Set[Formula]] = Map.empty
  val SF$_\land$: Map[Formula, Set[Formula]] = Map.empty
  subformulas.foreach :
    case $\varphi$: $\varphi_1$ /\ $\varphi_2$ =>
      SF$_\land$($\varphi_1$) += $\varphi_2$
      SF$_\land$($\varphi_2$) += $\varphi_1$
    case $\varphi$: $\varphi_1$ \/ $\varphi_2$ =>
      SF$_\lor$($\varphi_1$) += $\varphi$
      SF$_\lor$($\varphi_2$) += $\varphi$

  while worklist.nonEmpty do
    val (a, b) = worklist.pop()
    if (a, b) == s then return true
    if axiomFormulas.contains(a) then P_Cut(getInverse(a)) += b
    if axiomFormulas.contains(b) then P_Cut(getInverse(b)) += a

    SF$_\land$(a).foreach{ $\psi$ => P_and(b)($\psi$) += a /\ $\psi$}
    SF$_\land$(b).foreach{ $\psi$ => P_and(a)($\psi$) += b /\ $\psi$}

    (SF$_\lor$(a) ++ P_and(b)(a) ++ P_Cut(a)).foreach: $\varphi$ =>
      if !proven.contains(($\varphi$, b)) then
        proven.add(($\varphi$, b))
        worklist.push(($\varphi$, b))
    (SF$_\lor$(b) ++ P_and(a)(b) ++ P_Cut(b)).foreach: $\varphi$ =>
      if !proven.contains((a, $\varphi$)) then
        proven.add((a, $\varphi$))
        worklist.push((a, $\varphi$))

    if a == b then
      subformulas.foreach: $\varphi$ =>
        if !proven.contains(($\varphi$, a)) then
          proven.add(($\varphi$, a))
          worklist.push(($\varphi$, a))

  return false
\end{lstlisting}
\end{algorithm}

\begin{theorem}[Correctness of $\OL$ entailment algorithm]
    The \lstinline|prove| function of \autoref{alg:olEntailment} outputs \lstinline|true| if and only if the sequent $s$ has a proof from the axioms in $A$.
\end{theorem}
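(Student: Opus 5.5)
The proof has two directions. For soundness (\texttt{true} implies provable), I will show by induction on the iterations of the while loop that every sequent ever inserted into \lstinline|proven| (equivalently pushed on \lstinline|worklist|) is derivable from $A$ in the right-only proof system (\textsc{Ax}, \textsc{Hyp}, \textsc{Cut}, \textsc{Replace}, $\wedge$-$R$, $\vee$-$R$). Alongside, I will carry invariants describing the auxiliary maps: $b \in P_{Cut}(c)$ only if $(\getInverse(c), b)$ was popped; $a\land\psi \in P_{and}(b)(\psi)$ only if $(a,b)$ was popped and $a\land\psi\in\SF$; $\SF_\lor(a)$ consists of the disjunctions in $\SF$ having $a$ as a disjunct. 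Every push then matches a rule instance. When $(a,b)$ is popped, pushing $(\varphi,b)$ with $\varphi\in\SF_\lor(a)$ is $\vee$-$R$. Pushing $(a'\land a, b)$ with $a'\land a\in P_{and}(b)(a)$ is $\wedge$-$R$, with premises $(a',b)$ and $(a,b)$. Pushing $(\varphi,b)$ with $\varphi\in P_{Cut}(a)$ is \textsc{Cut} on $a$, with premises $(\getInverse(a),\varphi)$ and $(a,b)$. The diagonal case is \textsc{Replace}. Since the algorithm returns \texttt{true} only when $s$ is popped, $s$ is provable. I will use the earlier remark that the restricted right-only system is equivalent to the original one on $\nnf$ sequents.

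For completeness, I first invoke the Partial Cut Elimination Theorem and its Subformula Corollary. These give a proof of $s$ in which every cut formula is in $\AF$ and, after the $\nnf$/getInverse translation, every sequent is built from formulas in $\SF$ (closed under getInverse by the preprocessing loop). Let $D$ be the least set of sequents over $\SF$ containing the axioms and the \textsc{Hyp} instances and closed under the rules, with cuts restricted to $\AF\cup\AF'$. The provable sequent $s$ lies in $D$. I will show that every sequent of $D$ is eventually popped, unless the algorithm has already returned \texttt{true}. Because the loop only terminates when the worklist is empty, and every push is guarded by \lstinline|proven|, it suffices to prove that the final set $F$ of popped sequents is closed under each rule. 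Termination itself is immediate: each sequent is pushed at most once and there are finitely many pairs over $\SF$.

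The main obstacle is the closure argument for the two-premise rules, since the premises may be popped in either order. For $\wedge$-$R$ with premises $(a,b)$ and $(k,b)$, $a\land k\in\SF$, I split on which premise is popped later. Suppose $(k,b)$ is popped second. When $(a,b)$ was popped earlier, the update $\SF_\land(a)\ni k$ inserted $a\land k$ into $P_{and}(b)(k)$. When $(k,b)$ is later popped, $P_{and}(b)(k)$ is scanned, so $(a\land k,b)$ is pushed. If instead $(a,b)$ comes second, the symmetric update via $\SF_\land(k)\ni a$ puts $k\land a$ into $P_{and}(b)(a)$, and the scan on popping $(a,b)$ handles it. If both premises are the same sequent, the update precedes the scan in the same iteration. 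I also need the map updates to precede the scans within one iteration, and unordered pairs to handle the orientation, including the $b$-side symmetric lines.

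\textsc{Cut} is handled the same way. Given premises $(\phi,\gamma)$ and $(\gamma',\psi)$ with $\gamma\in\AF\cup\AF'$, whichever premise is popped first records its partner formula in $P_{Cut}$ keyed by the inverse of the cut formula, and the second pop reads it. This requires $\getInverse(\getInverse(\gamma))=\gamma$ syntactically on $\nnf$ formulas, which I will check. The one-premise rules $\vee$-$R$ and \textsc{Replace} close directly. Hence $F\supseteq D\ni s$, so $s$ is popped and the algorithm returns \texttt{true}.
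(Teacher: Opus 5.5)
Your proposal is correct and follows the paper's proof. Soundness matches each push to a rule instance via the invariants on $P_{Cut}$ and $P_\land$, and completeness shows that a rule's conclusion is pushed once its premises have been processed, splitting on which premise comes last. Your version is a bit tighter than the paper's: you make explicit the appeal to partial cut elimination (so that cuts are on formulas of $\mathit{AF}$ and all sequents live over $\mathit{SF}$), you reason about pop order rather than insertion order, you treat both orders in the ($\wedge$-$R$) case, and you flag the needed involution $\getInverse(\getInverse(\gamma))=\gamma$. The one thing to state explicitly is that in your second ($\wedge$-$R$) case the stored entry is literally $k\land a$. So $a\land\psi$ in the pseudocode must be read as the conjunction node already present in $\mathit{SF}$, as the paper implicitly assumes.
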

\begin{proof}
    ($\Rightarrow$) Suppose the algorithm outputs \lstinline|true| on the sequent $s$ and the axioms $A$. Let us prove that the sequent $s$ has a proof from axioms $A$. Before making any deductive step, the algorithm initializes the data structures. The sets $proven$, $P_{Cut}$ and $P_\land$ are initialized as empty and the sets $SF_\land$ and $SF_\lor$ are computed once and are not meant to change afterwards. In the main loop, the algorithm takes a sequent $(a, b)$ that has been proven but whose consequences have not yet been computed. Then, the map $P_{Cut}$ is updated twice. Its invariant is as follows:
    For every subformula $a$, $P_{Cut}(a)=\{b\mid (\getInverse(a),b)\in P\}$. Since $(a,b)\in P$, the invariant holds after each iteration. The situation is symmetric for $P_{Cut}(\getInverse(b))$.
    The map $P_\land$ is also updated according to the following invariant: for all subformulas $b, \psi$, $P_\land(b)(\psi)=\{a\land \psi\mid(a,b)\in P\text{ and }a\land \psi\in SF\}$. By construction, $\psi \in SF_\land(a)$ implies $a\land \psi\in SF$. Furthermore, since $(a,b)\in P$, the invariant holds. The reasoning is symmetric for $P_\land(b)$. Once the data structures are updated, we deduce new sequents. There are three cases (and their duals) to handle:
    \begin{enumerate}
        \item For sequents of the shape $(\phi,b)$, where $\phi$ belongs to the union of $P_{Cut}(a)$, $SF_\lor(a)$ and $P_\land(b)(a)$, we proceed set by set. If $\phi$ is in $SF_\lor(a)$, then we have $\phi=a\lor x$ for a given $x\in SF$ and $(a\lor x,b)$ is valid since it follows from the ($\vee$-$R$) rule. Otherwise, if $\phi\in P_{Cut}(a)$, then, according to $P_{Cut}$'s invariant, $(\getInverse(a),\phi)\in P$. Thus, also having $(a,b)$ valid, it is correct, using \textsc{Cut} rule, to deduce $(\phi,b)$. Finally, if $\phi=x\land a$ is in $P_\land(b)(a)$, then, according to $P_\land$'s invariant, $(x,b)\in P$ and $\phi\in SF$. Therefore, since $(a,b)$ and $(x,b)$ are valid, it is correct, using ($\wedge$-$R$) rule, to deduce $(x\land a,b)$.
        \item For sequents of the shape $(a,\psi)$ the reasoning is symmetric.
        \item In the last case, $a=b$, so by the \textsc{Replace} rule every sequent of the form $(a, \phi)$ is valid.
    \end{enumerate}
    Since the axiom and hypothesis sequents are valid sequents, the initialization of the worklist and $P$ is correct as well and hence, by induction, all sequents in $P$ are provable.
    \\
    ($\Leftarrow$) Conversely, suppose that the sequent $s$ has a proof from the axioms $A$. We proceed by induction on the proof. To show that provable sequents are eventually added to $P$, if the proof is an instance of \textsc{Ax} or of \textsc{Hyp}, $s$ is in $P$ by initialization, thus the algorithm yields \lstinline|true|. Then:
    \begin{enumerate}
        \item (\textsc{Cut} rule) Suppose $s=(a,b)$ and the premises $s_1=(a,c)$ and $s_2=(\getInverse(c),b)$. By induction, at some point $s_1,s_2\in P$. If $s_1$ has been added to $P$ before $s_2$, we have that $a\in P_{Cut}(\getInverse(c))$. Then when adding $s_2$ to $P$, we deduce, for all $\phi\in P_{Cut}(\getInverse(c))$, the sequent $(\phi,b)$. Hence $s$ gets added to the worklist. Hence, $s \in P$. The reasoning supposing $s_2$ is added to $P$ before $s_1$ is symmetric.
        \item (($\wedge$-$R$)) Suppose $s=(a\land k,b)$ and the premises $s_1=(a,b)$ and $s_2=(k,b)$. By induction, at some point $s_1,s_2\in P$. If $s_1$ has been added to $P$ before $s_2$, we have that $a\land k\in P_\land(b)(k)$, since $SF_\land(a)\subseteq SF$ and $s$ is the goal sequent. Then, the treatment of \textsf{add}$(s_2)$ deduces, for all $\phi\in P_\land(b)(k)$, the sequent $(\phi,b)$. Since $a\land k\in P_\land(b)(k)$, $s$ gets added to the worklist (and later in $P$)
        \item (($\vee$-$R$)) Suppose $s=(a,b\lor k)$ and the premise is $s_1=(a,b)$. By induction, at some point $s_1\in P$ and then the algorithm has deduced, for all $\psi\in SF_\lor(b)$, the sequent $(a,\psi)$. Since $k\in SF_\lor(b)$, $s$ is added to the worklist. The reasoning when $s_1=(a,k)$ is symmetric.
        \item (\textsc{Replace} rule) Suppose $s=(a,b)$ and the hypothesis $s_1=(a,a)$. By induction hypothesis, $s_1\in P$. Thus, the algorithm has deduced, for all $\phi\in SF$, the sequent $(a,\phi)$, and in particular $s$.
    \end{enumerate}
\end{proof}

\begin{theorem}[Complexity of entailment algorithm]
    The algorithm in \autoref{alg:olEntailment} has a time complexity in $\mathcal{O}(n^2(1+|A|))$ and space complexity $\mathcal{O}(n^2)$.
\end{theorem}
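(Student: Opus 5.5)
We bound time by an amortized count of the work done per popped sequent. Let $n$ be the total size of the input sequent and axioms. The set \lstinline|subformulas| is obtained by adding, for each subformula, the subformulas of its inverse. Because $\nnf$ and $\getInverse$ are memoized and $\getInverse(\getInverse(\phi)) = \phi$ in negation normal form, this set has size $\mathcal{O}(n)$. Hence there are at most $\mathcal{O}(n^2)$ unordered pairs, so \lstinline|proven| and the worklist hold $\mathcal{O}(n^2)$ sequents. Each sequent is pushed at most once, since a push is guarded by the membership test. So the main loop runs $\mathcal{O}(n^2)$ times. Initializing $\SF_\land$ and $\SF_\lor$ is linear in $|\SF|$, because each binary subformula contributes two entries.

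For one popped sequent $(a,b)$ we split the work into four parts and bound each by a global sum rather than per iteration.

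\textbf{Cut.} The $P_{Cut}$ updates cost $\mathcal{O}(1)$. Iterating over $P_{Cut}(a)$ costs $|P_{Cut}(a)|$, and this is nonzero only if $a \in \AF\cup\AF'$, a set of size $\mathcal{O}(|A|)$. For each fixed axiom formula $c$, we have $|P_{Cut}(c)| \le |\SF| = \mathcal{O}(n)$. The number of popped sequents with $a = c$ is at most $\mathcal{O}(n)$. So each such $c$ contributes $\mathcal{O}(n^2)$ across the whole run, and all of them together contribute $\mathcal{O}(n^2|A|)$.

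\textbf{$\vee$-R.} Summed over all pops, iterating over $\SF_\lor(a)$ costs $\sum_{(a,b)} |\SF_\lor(a)| \le \sum_b \sum_a |\SF_\lor(a)| = \mathcal{O}(n)\cdot\mathcal{O}(n)$. This works because $\sum_a |\SF_\lor(a)| \le 2|\SF|$.

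\textbf{$\wedge$-R.} The same charging argument applies to the loop over $\SF_\land(a)$ that updates $P_\land$. Iterating over $P_\land(b)(a)$ is bounded by what was inserted into it. Each insertion $a\land\psi$ into $P_\land(b)(\psi)$ is read only when a pair $(b,\psi)$ is popped, which happens at most once. So the total read cost is at most the total insertion cost, which is $\mathcal{O}(n^2)$.

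\textbf{Replace.} The $a=b$ branch costs $|\SF|=\mathcal{O}(n)$. It fires at most $|\SF|$ times, once per diagonal sequent, giving $\mathcal{O}(n^2)$.

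The per-element operations (set membership, insertion, map lookups) are assumed $\mathcal{O}(1)$ via hash-consing of formulas. Summing the four parts gives $\mathcal{O}(n^2(1+|A|))$.

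For space, \lstinline|proven| and the worklist use $\mathcal{O}(n^2)$. $P_{Cut}$ has $\mathcal{O}(|A|)$ keys, each with at most $\mathcal{O}(n)$ values. This is $\mathcal{O}(n|A|)$, which is $\mathcal{O}(n^2)$ because $|A|\le n$ (each axiom contributes to the input size). Each entry of $P_\land$ corresponds to an insertion, and we bounded the insertions by $\mathcal{O}(n^2)$. Maps can be stored sparsely so that empty entries cost nothing.

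The main obstacle is the $P_\land$ accounting. The structure $P_\land(b)(k)$ is indexed by pairs, so a naive bound would be cubic. The charging argument must also check that no insertion is read twice; this follows because each unordered pair is popped once. Both facts rely on ordered entries $P_\land(b)(\psi)$ being read only when the matching pair is popped.
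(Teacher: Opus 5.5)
Your proof is correct. It shares the paper's outer structure: linear initialization, one execution of the loop body per sequent, and the bound $\sum_a|\mathit{SF}_\land(a)|=\mathcal{O}(|\mathit{SF}|)$ for the $P_\land$ updates. The difference is that you account for the deduction step from the opposite side.

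The paper charges each membership test on the set of proven sequents to its \emph{conclusion}. A fixed target $(\phi,b)$ can be produced in only $\mathcal{O}(1+|A|)$ ways: at most twice via $\vee$-R, once via $\wedge$-R (from whichever premise is popped second), once per axiom formula via \textsc{Cut}, and twice via \textsc{Replace}. There are $\mathcal{O}(n^2)$ targets. This is exactly the per-sequent count of the Horn-clause reduction, so it makes explicit that the lazy algorithm performs at most one test per rule instance.

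You instead charge to the \emph{premise} side and to the data structures. For $\vee$-R you use $\sum_a|\mathit{SF}_\lor(a)|\le 2|\mathit{SF}|$ times the number of partners. For \textsc{Cut} you multiply the number of pops containing an axiom formula $c$ by $|P_{Cut}(c)|$ and sum over the $\mathcal{O}(|A|)$ such $c$. For $\wedge$-R you use ``reads $\le$ insertions'', since $P_\land(b)(\psi)$ is read only when the unordered pair $\{b,\psi\}$ is popped.

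Your version bounds the actual iteration cost over the three unioned sets directly, and the insertion count doubles as the space bound for $P_\land$. The paper instead needs a separate structural fact: for fixed $b$, a formula $k_1\land k_2$ lies only in $P_\land(b)(k_1)$ and $P_\land(b)(k_2)$.

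One small remark on your space bound for $P_{Cut}$: you do not need $|A|\le n$. Its keys are themselves subformulas (they lie in $\mathit{SF}\cup\mathit{SF}'$), so there are $\mathcal{O}(n)$ of them however many axioms there are.
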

\begin{proof}
    We divide the complexity analysis into two parts: the initialization, and the main loop. Clearly, the initialization of every data structure takes time either constant or $\mathcal{O}(|SF|)$, i.e., linear in the size of the input formulas.

    For the main loop, first observe that its body will be executed at most once for every sequent $s=(a,b)$ where $a, b \in SF$. We analyze the time taken by the body of the loop across all sequents $s=(a, b)$, and split it into two parts: updating the data structure and adding new sequents to the worklist.

    The $P_{Cut}$ updates take average constant time for every $(a, b)$.
    Then, the $\SF_\land$ loop iterates over up to $|\SF|$ elements, but observe that (since each conjunction contributes to exactly two sets)
    $$\sum_{b \in \SF} |\SF_\land(b)| = \mathcal{O}(|\SF|)$$
    Hence, for a fixed $a$, the cost of this loop across all calls \lstinline|add(a, b)| is $\mathcal{O}(|\SF|)$ and hence the total cost across all possible sequents is $\mathcal{O}(|\SF|^2)$. The other $\SF_\land$ loop is similar.

    Then, we deduce new sequents. The key observation here is that for every particular way of deducing a sequent $s$ (that is, particular deduction rule and premises), there is exactly one time we will check if $s$ has already been proven, and add it to the worklist if not. Indeed, let us count for every sequent $s$ the number of times that the condition \lstinline|if !proven.contains(s)| is checked. Note that here we are not summing over all executions of the body with unique $(a, b)$, but combining all executions and summing over the $s$ that we possibly try to deduce, that is:
    \begin{enumerate}
        \item (\textsc{Cut}) A sequent $(\phi, b)$ is deduced 1 time for every $a$ such that $(a, b)$ is deduced and $\phi \in P_{Cut}(a)$, so at most $|A|$ times since  $P_{Cut}(a)$ is non-empty only when $a$ is an axiom formula.
        \item (($\wedge$-$R$)) A sequent $(\phi_1 \land \phi_2, b)$ is deduced 1 time only from either $(\phi_1, b)$ or $(\phi_2, b)$ (the second that is proven), since $\phi \in P_\land(b)(a)$ only if $a=\phi_1$ or $a=\phi_2$.
        \item (($\vee$-$R$)) A sequent $(\phi_1 \lor \phi_2, b)$ is deduced 1 time at most from each of $(\phi_1, b)$ or $(\phi_2, b)$  since $\phi \in \SF_\lor(a)$ only if $a=\phi_1$ or $a=\phi_2$.

        \item Again at most the same number of symmetric ways.

        \item (\textsc{Replace}) A sequent $(\phi, \psi)$ is deduced 1 time from each of $(\phi, \phi)$ and $(\psi, \psi)$, so 2 times in total
    \end{enumerate}
    Hence every sequent $(\phi, \psi)$ can be deduced in $\mathcal{O}(1+|A|)$ ways, hence the total number of deduction attempts across all executions of the body is $\mathcal{O}(|\SF|^2(1+|A|))$, and so the total cost across all executions of the body is $\mathcal{O}(|\SF|^2(1+|A|))$.

    Note that if we had simply computed the worst-case complexity over a single execution of the loop and multiplied by $n^2$, we would have obtained a worse bound. For example, for a sequent $(a, a)$ taken from the worklist, the execution of the loop takes linear time in $|\SF|$ to deduce all sequents of the form $(a, \phi)$, so we would only obtain a cubic bound.
    \vspace{0.5em}

    For space complexity, note that every data structure contains at most $\mathcal{O}(n^2)$ elements. For $P_\land$, note that, for any fixed $b$, if a formula $\phi \in SF$ is contained in both $P_\land(b)(k_1)$ and $P_\land(b)(k_2)$, then $\phi = k_1 \land k_2$. Hence, $\sum_{k\in SF} P_\land(b)(k) \leq 2|SF|$.
\end{proof}

\paragraph{Note on complexity of memoization and equality testing} The complexity proof of \autoref{alg:olEntailment} relies on the fact that operations on data structures such as hash maps and hash sets can be done in average constant time, but this is a simplification. In reality, adding or removing an element from a hash set or hash map requires checking equality with other elements to ensure that there are no hash collisions. However, checking whether two formulas given as syntax trees are equal takes linear time, and hence would add a linear factor to the complexity of the algorithm.

Luckily, this can be avoided by using reference equality instead of structural equality.
Recall that by the subformula property, the algorithm only ever sees the $\mathcal{O}(n)$ different subformulas of the input and their negation normal form. We can fix one particular object in memory for each of these formulas, and assign a different unique identifier to each. Such identifiers require only $\mathcal{O}(\log(n))$ bits to represent. Then, if two terms have the same identifier, they must be structurally equal. Additionally, we can (but do not need to) enforce that structurally equal formulas share the same object in memory by memoizing the construction of formulas, a strategy known as hash-consing.

Concretely, we must ensure that throughout the algorithm we do not create new formula objects.
In particular, computing \lstinline|getInverse| $=$ \lstinline|nnf($\neg$a)| creates a new node \lstinline|$\neg$a|, and then again $\mathcal{O}(|a|)$ new nodes when computing its negation normal form. To avoid this, we define \lstinline|getInverse| for formulas in negation normal form by recursion on the formula structure, memoizing the result in a field of each formula object.
This guarantees we never instantiate any node beyond the $n$ subnodes of the original formula (in negation normal form) and their inverse for a total of $2n$ nodes.

\section{A new way to generate hard synthetic SAT problems}
\label{sec:olBenchmarks}

We now study a particular class of satisfiability problems corresponding to verifying the equivalence between a Boolean circuit and its OL-normal form. A Boolean circuit is the representation of a formula $\phi$ as a directed acyclic graph (DAG) where each node is a logical operator (AND, OR, NOT) and the edges represent the flow of information, possibly enabling \textit{structure sharing}. Remember that we note $\NFOL(\phi)$ the OL-normal form of $\phi$. We then create, for the formula $\phi$ the problem $\phi' := \phi \leftrightarrow \NFOL(\phi) $. Note that this formula is always valid no matter what the original $\phi$ is. 

We apply this construction to a collection of circuits computing arithmetic functions from the EPFL Combinational Benchmark \cite{amaruEPFLCombinationalBenchmark2015}. Since the circuits themselves are describe computations, they are typically trivially satisfiable and invalid (otherwise they would represent constant functions true or false). Additionally, they each contain many output bits, so each circuits actually yields many different problems $\phi'$, often of increasing size. Our goal is to compare the runtime of our improved algorithm from \autoref{sec:improvedOLAlgo} in comparison with the state-of-the-art SAT solver Kissat~\cite{BiereFallerFazekasFleuryFroleyksPollitt-SAT-Competition-2024-solvers}.

We generate our benchmark the following way: For each circuit file (for example, \textsf{sqrt.aig}) and output bit $n$, we extract the formula $\phi_n$ corresponding to the ``cone'' of the $n$-th output bit of the circuit. We then compute the \OL-normal form $\NFOL(\phi_n)$ and export it to a second Aiger file. We also export a third file in DIMACS format containing clauses representing the Tseitin encoding of the formula $\phi_n \leftrightarrow \NFOL(\phi_n)$. We then measure the time taken by our implementation of orthologic proof search in C to import and prove the sequent $\phi \vdash \NFOL(\phi)$, and the time taken by Kissat to decide the satisfiability of the corresponding set of clauses. Results are reported in \autoref{fig:ol_kissat_arith} and \autoref{fig:ol_kissat_max}.

\begin{figure}[htbp]
  \centering
  \begin{subfigure}[t]{0.49\linewidth}
    \includegraphics[width=\linewidth]{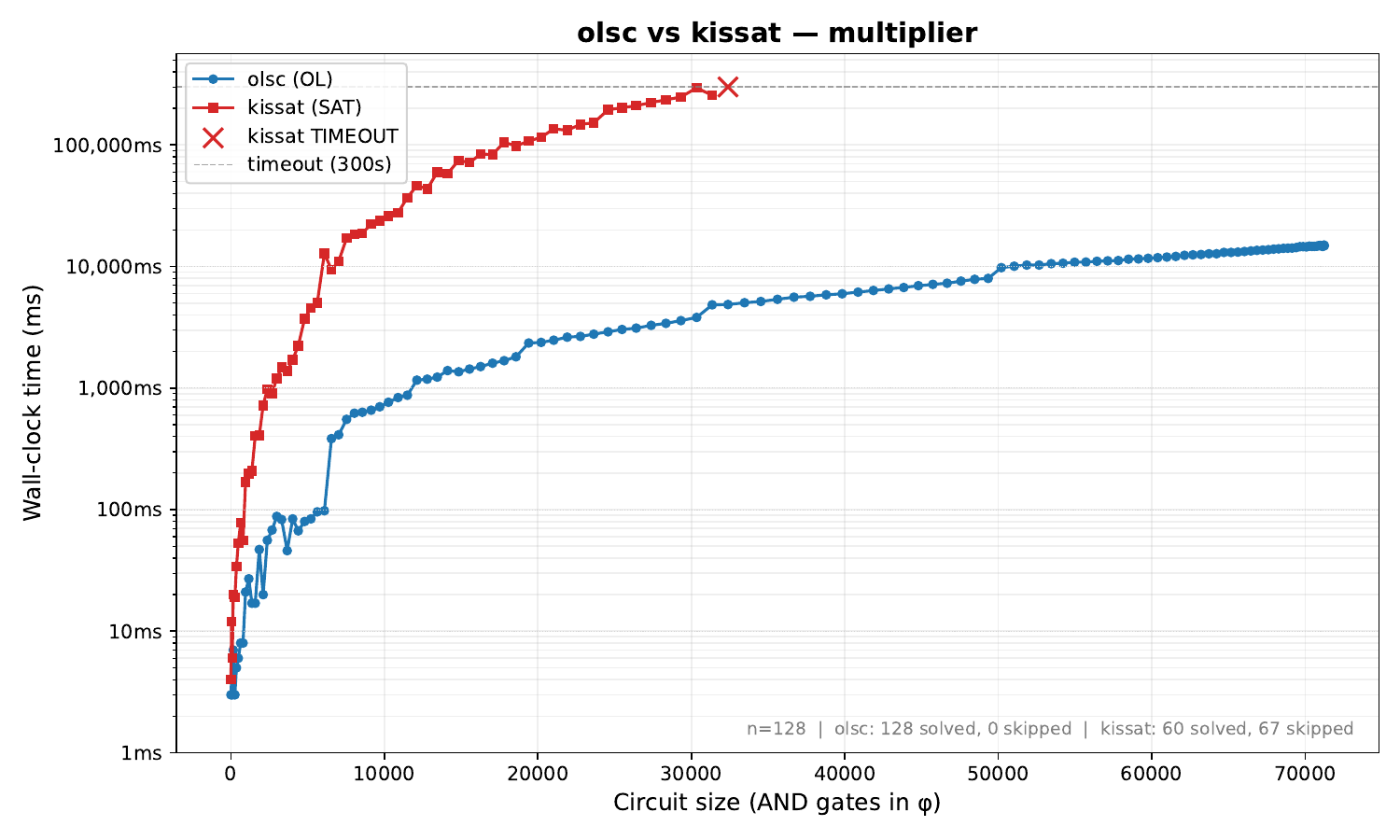}
    \caption{\textsf{multiplier}: \OL solves all 128 bits; Kissat times out on 67.}
    \label{fig:ol_kissat_multiplier}
  \end{subfigure}
  \hfill
  \begin{subfigure}[t]{0.49\linewidth}
    \includegraphics[width=\linewidth]{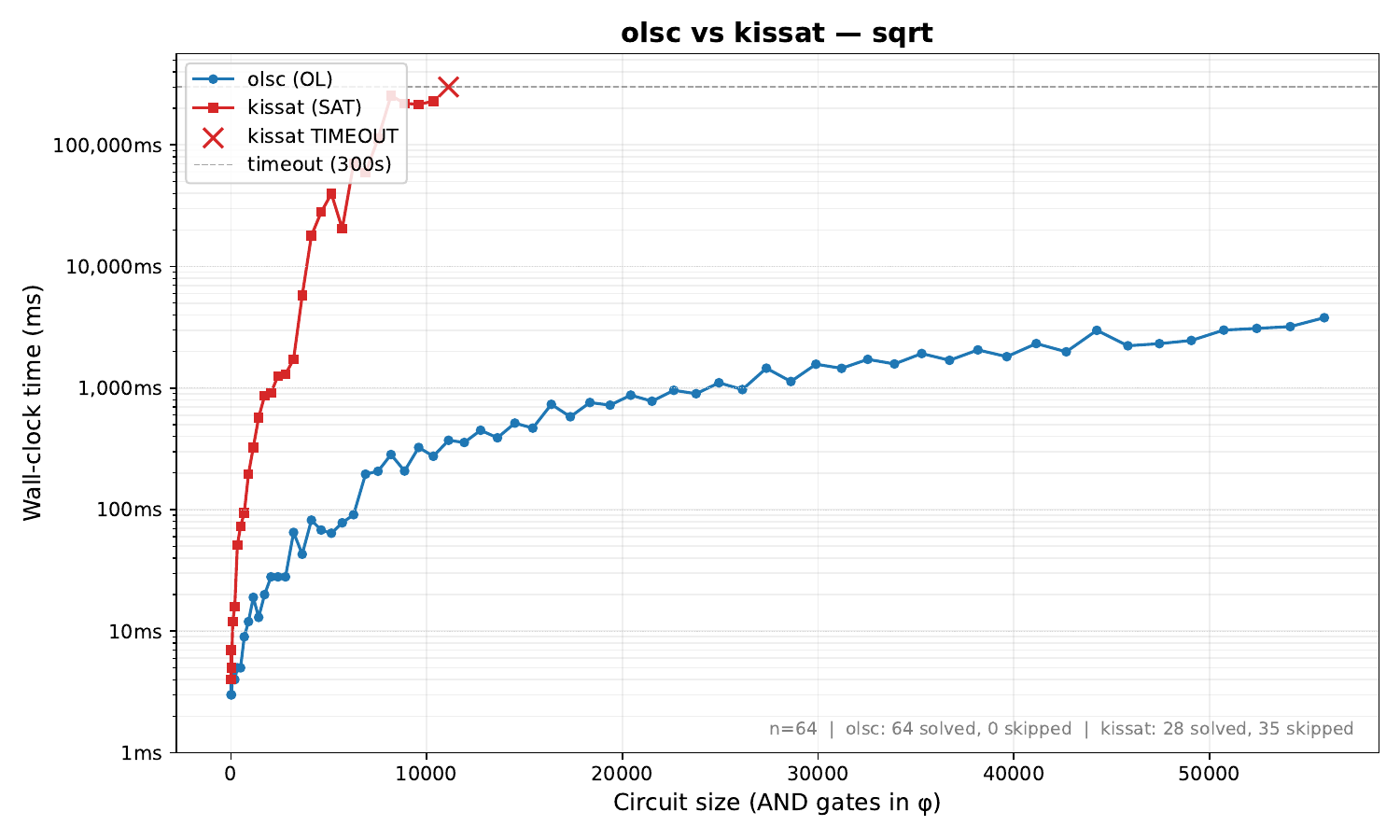}
    \caption{\textsf{sqrt}: \OL solves all 64 bits; Kissat times out on 35.}
    \label{fig:ol_kissat_sqrt}
  \end{subfigure}

  \medskip

  \begin{subfigure}[t]{0.49\linewidth}
    \includegraphics[width=\linewidth]{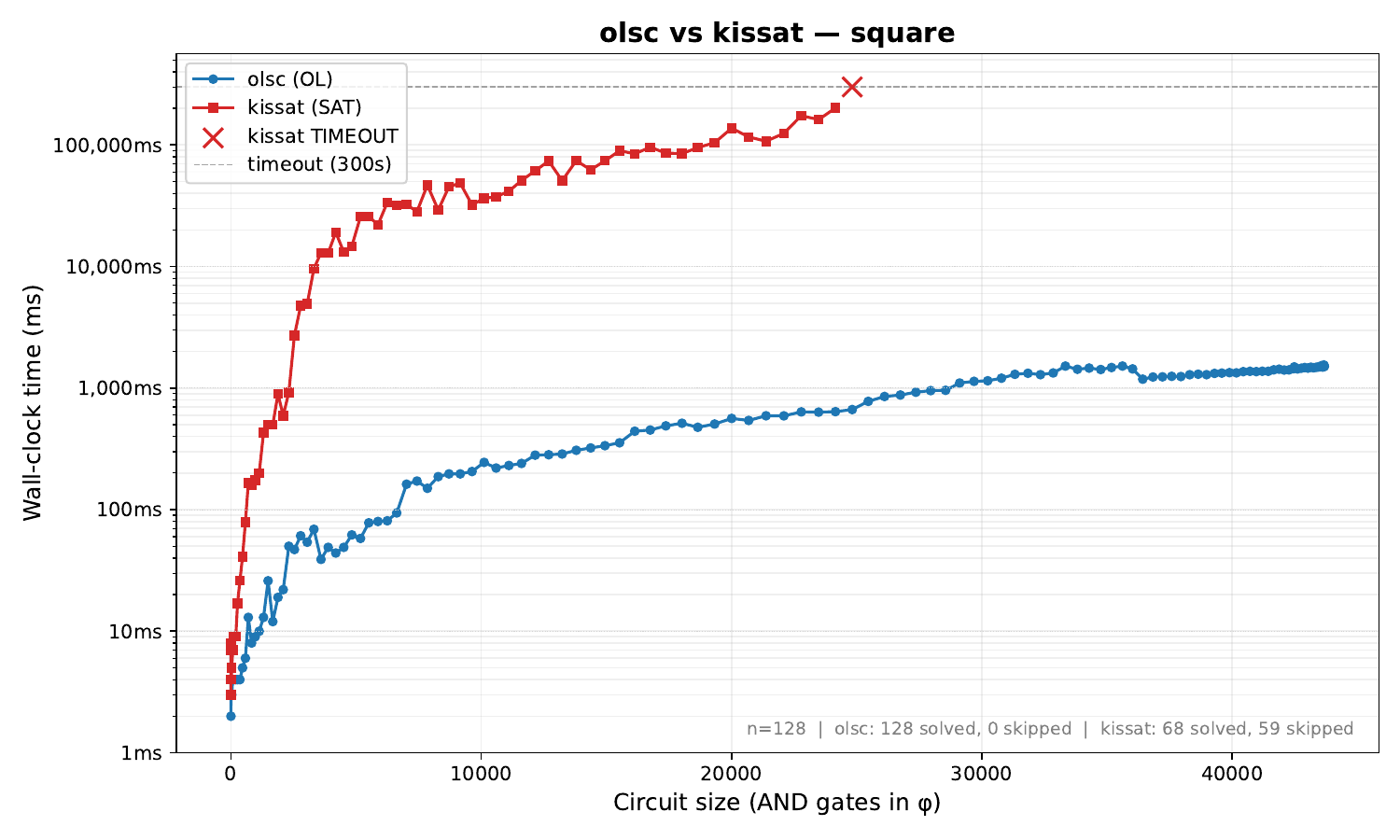}
    \caption{\textsf{square}: \OL solves all 128 bits; Kissat times out on 59.}
    \label{fig:ol_kissat_square}
  \end{subfigure}
  \hfill
  \begin{subfigure}[t]{0.49\linewidth}
    \includegraphics[width=\linewidth]{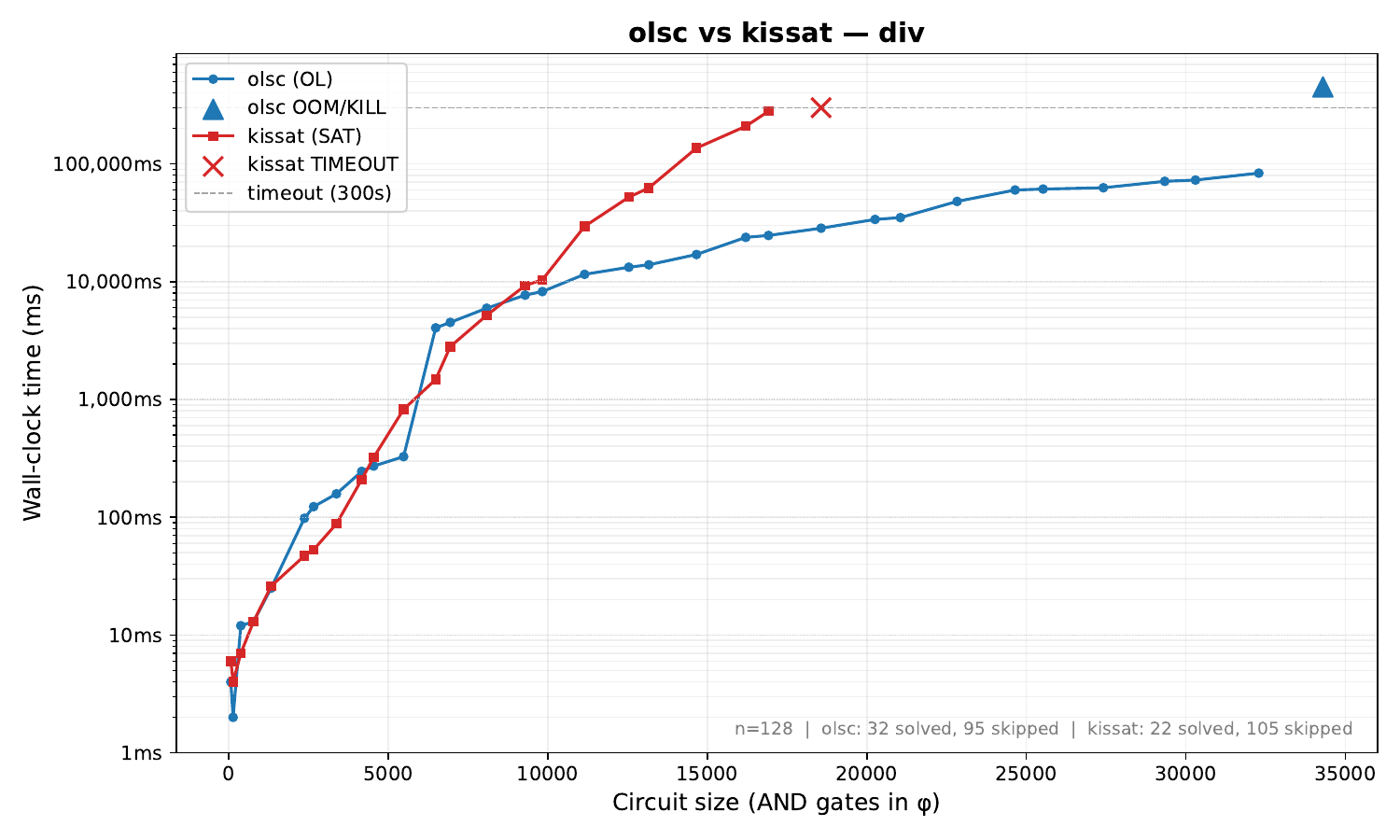}
    \caption{\textsf{div}: \OL solves 32 bits (OOM above $30\,\text{K}$ gates); Kissat solves 22 (timeout above $16\,\text{K}$ gates).}
    \label{fig:ol_kissat_div}
  \end{subfigure}
  \caption{Runtime comparison of \OL and Kissat on the $\phi \vdash \NFOL(\phi)$ benchmark for arithmetic circuits from the EPFL suite~\cite{amaruEPFLCombinationalBenchmark2015}. The $x$-axis shows circuit size (number of AND gates in $\phi$); the $y$-axis shows wall-clock time in milliseconds (logarithmic). A red cross marks a Kissat timeout (300\,s); a blue triangle marks an \OL out-of-memory (OOM) event.}
  \label{fig:ol_kissat_arith}
\end{figure}

\begin{figure}[htbp]
  \centering
  \includegraphics[width=0.65\linewidth]{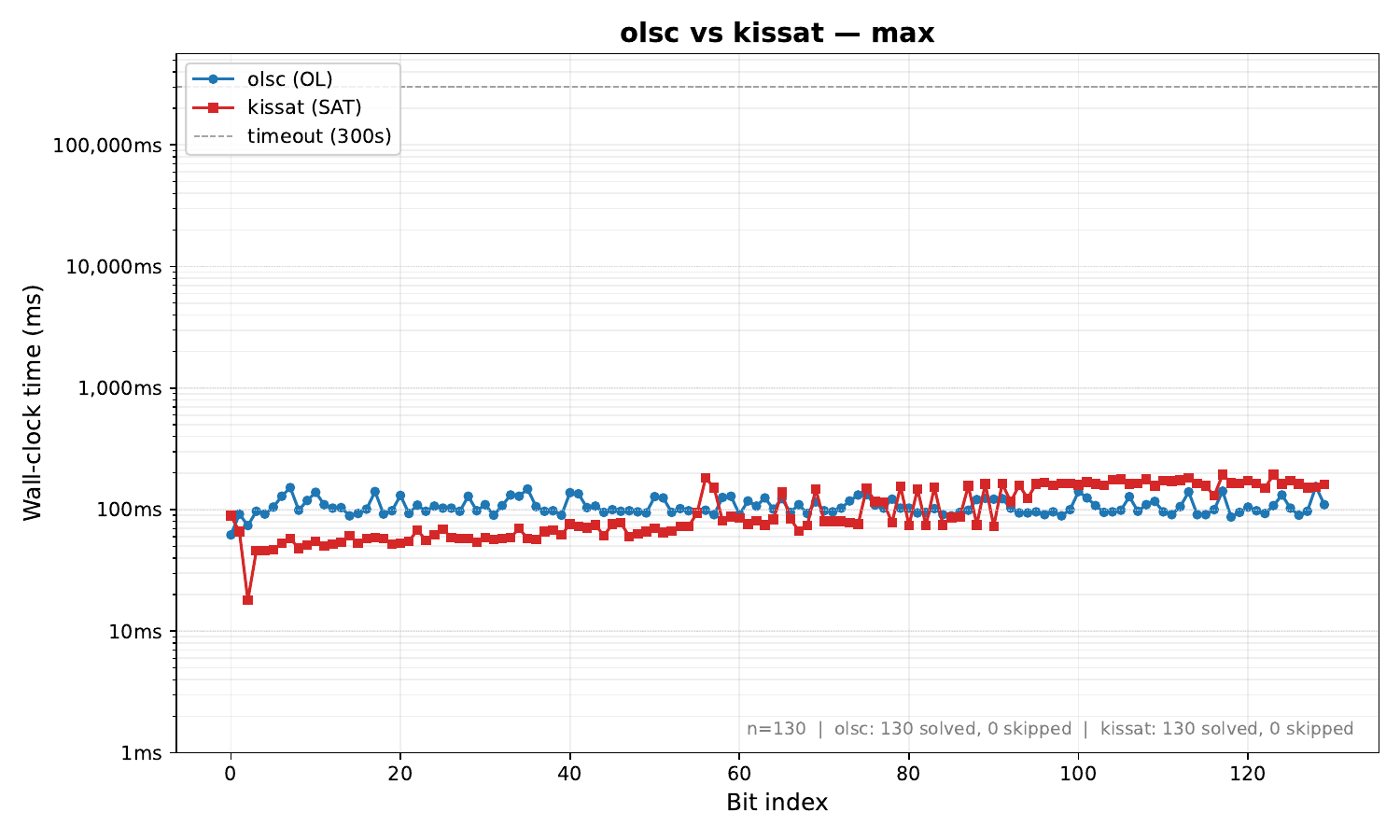}
  \caption{Runtime comparison on \textsf{max}. All 130 output bits have small cones (below $1\,000$ AND gates); both \OL and Kissat solve every instance in under 200\,ms with comparable runtimes. The $x$-axis here is the bit index rather than circuit size, as all cones are of similar size.}
  \label{fig:ol_kissat_max}
\end{figure} 

\paragraph{Easy and hard instances.}
On the three arithmetic circuits \textsf{multiplier}, \textsf{sqrt}, and \textsf{square}, \OL solves every output bit within the time budget, whereas Kissat times out on a large fraction of instances: 67 out of 128 bits for \textsf{multiplier}, 35 out of 64 for \textsf{sqrt}, and 59 out of 128 for \textsf{square} (\autoref{fig:ol_kissat_multiplier}, \autoref{fig:ol_kissat_sqrt}, \autoref{fig:ol_kissat_square}).
On the \textsf{div} circuit (\autoref{fig:ol_kissat_div}), instances are larger and both tools encounter their limits. Kissat times out beyond $\sim\!16\,000$ AND gates. \OL continues further but runs out of memory (OOM) on the very largest output cones (above $30\,000$ gates).
In contrast with the result presented here which uses the algorithm presented in \autoref{sec:proofsearch}, a C implementation of orthologic proof search based on preprocessing of the relations between sequents and data flow, similar to the Horn clause reduction described in \autoref{sec:preliminaries}, is orders of magnitude slower, and requires prohibitive amount of memory for all but the simplest circuits. This demonstrates the importance of the algorithmic improvements described in \autoref{sec:proofsearch} to be able to solve these benchmarks.

We intentionally leave out some of the circuits from the EPFL suite from our report. Three of them (hyp, log and sin) are consistently too big and neither tool can prove even the smallest output bit within the time budget. Three others (adder, bar and max) sit on the other side of the spectrum and are too small: both tools solve every output bit in under 200\,ms, with comparable runtimes (\autoref{fig:ol_kissat_max} shows the results for \textsf{max}).

For the four problems that end up in the sweet spot, the runtime curves show an asymptotic advantage of \OL. The algorithm cannot, by construction, deduce more than $\O(n^2)$ inequalities (corresponding to clauses). This means that there must exist a short solution to the unsatisfiability of the miter $\phi \wedge \neg\NFOL(\phi)$ but Kissat does typically not find it. Hence, constructing benchmarks of the form $\phi \leftrightarrow \NFOL(\phi)$ suggests a new attack angle to produce unsatisfiable benchmarks that are not well-covered by heuristics of SAT solvers.

\section{Preprocessing}
\label{sec:preprocessing}
In this final section, we describe a different orthologic-based approach to augment SAT solvers. The idea is to normalize the original formula into a normal form with respect to ortholattices laws, before solving them. As shown by researchers in recent work~\cite{DBLP:conf/cav/GuilloudBMK23}, orthologic formulas admit a normal form  that can be computed in worst-case quadratic time. Furthermore, this normal form is \textit{minimal}: it is always the smallest formula in the equivalence class of the input. Since the formula is smaller, and possibly has fewer variables, it is reasonable to expect that, in general, the resulting formula will be easier to solve for SAT solvers. Of course, there is no guarantee that this is the case; because SAT solvers use many heuristics, it is even possible that the transformation happens, by ill luck, to make the SAT solver slower. However, we will show  that, on average over a large set of benchmarks, it significantly improves solving time.
\paragraph{Methodology}
For every circuit in our benchmark, we compute its OL normalization. We then clausify the original circuit as well as the normalized one, and give them to a SAT solver,  Kissat~\cite{BiereFallerFazekasFleuryFroleyksPollitt-SAT-Competition-2024-solvers}. We justify that choice by the fact that Kissat is one of the top SAT solvers at the moment. For every circuit, we compute the size of the circuit before OL-normalization, the size after OL-normalization and the time taken by OL-normalization and by Kissat on both variants of the circuit.
\subsection{Evaluation}
To evaluate the impact of the OL-normalization algorithm as a preprocessing algorithm, we gathered different kinds of logical circuits that are used for real purposes. Notice that those circuits we use are given in a non-clausal form. The first kind is a set of 22 formulas, generated in 2001 for the formal verification of correct super-scalar microprocessors~\cite{velev}, 21 of them are unsatisfiable, and the remaining one is satisfiable. In the following table, only 7 of them appear in Table~\ref{tab:average}, since the OL-normalization timed out on the others. The second kind is a set of several hundred of bit vectors problems published in 2009 among the AIGER benchmarks~\cite{BiereAigerBenchmarks}, of which we select the 400 largest circuits. Finally, the last kind is a set of cryptographic boolean circuits that model known plain-text attacks on bounded-round versions of DES~\cite{tjunttil}. However, the SAT solver timed out on most of those instances in both cases; their results were not informative, so we omit them from the results. We also filter out too simple problems where Kissat is already very fast. All together, this gives us 410 problems on which we applied the method previously explained to measure the impact of the OL-normalization algorithm. Table~\ref{tab:average} summarizes the resulting data and shows the average on problems for which there has not been any timeout. Detailed results are available in Tables \ref{tab:preprocessingBenchmarks0} to \ref{tab:preprocessingBenchmarks3} in \autoref{sec:appendix}.
\begin{table}[htbp]
    \begin{center}\resizebox{\textwidth}{!}{%
        \begin{tabular}{| c H | c | c | c | c | c | c | c |}
        \hline
            &size&OL size&OL time&Kissat&Kissat/OL&Kissat/OL+norm&speed-up&speed-up with norm\\
            \hline
            average&9726&6322&4.66&59.36&50.76&55.43&0.16&0.07\\\hline
            \end{tabular}}
        \caption{Average circuit size and time over all circuits (see Appendix for more results).}
        \label{tab:average}
    \end{center}
\end{table}
The column \textsf{size} shows the size (in nodes) of the formula. The columns \textsf{Kissat} and \textsf{Kissat/OL} show the time for Kissat to decide respectively the formula without OL-normalization and the formula OL-normalized. The column \textsf{Kissat/OL+norm} shows the total time of OL-normalization and the solving of the formula OL-normalized. In many cases, the simplified circuit has actually more gates than the input circuit; this is due to the fact that reusing sub formulas that already exist somewhere else in the circuit can be more efficient than replacing them by simplified but new subformulas.
The last two columns \textsf{speed-up} and \textsf{speed-up with norm} show the speed-up respectively without the OL-normalization time and with it. We remark a net average speed-up of on average 7\% when counting the normalization time, and 16\% purely on SAT solving time.

\section{Conclusion}
\label{sec:conclusion}
We introduced a new proof-search algorithm for orthologic with axioms that is in practice significantly faster than previously existing algorithms, while retaining the same worst-case $\mathcal{O}(n^2(1+|A|))$ complexity.

We also identified and studied a new family of synthetic SAT benchmarks arising from the construction $\phi \leftrightarrow \NFOL(\phi)$: for any formula $\phi$, this equivalence is a tautology, yet its Tseitin encoding yields unsatisfiable instances that can be hard for state-of-the-art SAT solvers. We applied this construction to output bits of EPFL arithmetic circuits and showed that our new algorithm solves them efficiently while Kissat times out on a significant fraction. Since orthologic proofs are at most quadratic in size, short proofs exist for all these instances — yet SAT solver heuristics fail to find them. This construction hence provides a principled way to generate unsatisfiable benchmarks that expose a gap in CDCL solving.

Moreover, we evaluated the usefulness of the quadratic-time orthologic normalization as a preprocessing technique on satisfiability benchmarks in circuit form. We demonstrated that, on average, problems are solved approximately 7\% faster after OL-normalization on our benchmark set. These results suggest that orthologic entailment and normalization are promising building blocks for SAT solving pipelines, and we believe deeper integration with modern SAT solvers is a fruitful direction for future work.

\bibliography{sguilloud.bib, biblio.bib, vkuncak.bib}

\appendix
\section{Appendix: Detailed Benchmark Results}
\label{sec:appendix}
\begin{table}[htbp]
    \begin{center}\resizebox{\textwidth}{!}{%
        \begin{tabular}{| c | c H | c | c | c | c | c | c |}
        \hline
            Problem&Size&OL size&OL time&Kissat&Kissat/OL&Kissat/OL+norm&Speed-up&Speed-up with norm\\
            \hline
            4pipe&5274&6490&5,00&2,4686&1,6114&6,6114&0,5319&-0,6266\\\hline
            5pipe&9264&8223&16,00&3,9337&1,7647&17,7647&1,2292&-0,7786\\\hline
            6pipe&16170&18910&41,00&27,2784&12,3026&53,3026&1,2173&-0,4882\\\hline
            cvs-vc81759&3662&6182&0,90&17,8101&17,2549&18,1582&0,0322&-0,0192\\\hline
            cvs-vc81760&3662&6182&0,97&17,6892&17,1864&18,1539&0,0293&-0,0256\\\hline
            des-4-1-10&2556&4828&7,00&64,7531&83,9051&90,9086&-0,2283&-0,2877\\\hline
            des-4-1-11&2553&4825&6,75&23,9078&46,6821&53,4344&-0,4879&-0,5526\\\hline
            des-4-1-12&2556&4828&7,02&29,5854&41,5785&48,6027&-0,2884&-0,3913\\\hline
            des-4-1-13&2552&4824&7,56&31,2699&60,5110&68,0694&-0,4832&-0,5406\\\hline

            \end{tabular}}
        \caption{Measures on problems before and after OL normalization}
        \label{tab:preprocessingBenchmarks0}
    \end{center}
\end{table}

\begin{table}[htbp]
    \begin{center}\resizebox{\textwidth}{!}{
        \begin{tabular}{| c | c H | c | c | c | c | c | c |}
            \hline
            Problem&Size&OL size&OL time&Kissat&Kissat/OL&Kissat/OL+norm&Speed-up&Speed-up with norm\\
            \hline
            des-4-1-14&2546&4818&7,56&58,8476&49,2691&56,8276&0,1944&0,0355\\\hline
            des-4-1-15&2556&4828&7,29&31,0774&81,6611&88,9508&-0,6194&-0,6506\\\hline
            des-4-1-16&2549&4821&7,41&9,1072&39,1284&46,5358&-0,7672&-0,8043\\\hline
            des-4-1-17&2555&4827&7,37&41,7880&69,5035&76,8719&-0,3988&-0,4564\\\hline
            des-4-1-18&2553&4825&7,38&35,1534&38,0848&45,4601&-0,0770&-0,2267\\\hline
            des-4-1-19&2554&4826&7,19&51,2722&83,1188&90,3110&-0,3831&-0,4323\\\hline
            des-4-1-1&2547&4819&7,72&53,2462&68,6359&76,3569&-0,2242&-0,3027\\\hline
            des-4-1-20&2559&4831&6,66&36,2918&55,8569&62,5218&-0,3503&-0,4195\\\hline
            des-4-1-21&2554&4826&7,05&69,2210&67,0272&74,0788&0,0327&-0,0656\\\hline
            des-4-1-2&2556&4828&6,97&25,1371&99,2928&106,2619&-0,7468&-0,7634\\\hline
            des-4-1-3&2557&4829&6,64&61,7551&77,1045&83,7470&-0,1991&-0,2626\\\hline
            des-4-1-4&2559&4831&6,72&25,1765&57,9580&64,6792&-0,5656&-0,6107\\\hline
            des-4-1-5&2554&4826&7,02&62,9419&90,5743&97,5941&-0,3051&-0,3551\\\hline
            des-4-1-6&2551&4823&6,93&83,5780&77,4677&84,3977&0,0789&-0,0097\\\hline
            des-4-1-7&2550&4822&7,60&11,1279&18,6076&26,2058&-0,4020&-0,5754\\\hline
            des-4-1-8&2552&4824&7,58&63,5239&44,8265&52,4088&0,4171&0,2121\\\hline
            des-4-1-9&2555&4827&7,16&24,8501&20,0497&27,2067&0,2394&-0,0866\\\hline
            des-4-2-10&4144&7788&7,13&27,2010&53,4352&60,5688&-0,4910&-0,5509\\\hline
            des-4-2-11&4121&7737&7,27&14,0401&19,1189&26,3853&-0,2656&-0,4679\\\hline
            des-4-2-12&4149&7793&7,43&23,2885&50,5641&57,9912&-0,5394&-0,5984\\\hline
            des-4-2-13&4152&7794&7,49&75,3467&15,6686&23,1604&3,8088&2,2533\\\hline
            des-4-2-14&4154&7800&7,83&19,9972&52,7515&60,5773&-0,6209&-0,6699\\\hline
            des-4-2-15&4161&7807&7,72&23,3266&66,0310&73,7542&-0,6467&-0,6837\\\hline
            des-4-2-16&4118&7736&7,39&31,8417&31,0057&38,3991&0,0270&-0,1708\\\hline
            des-4-2-17&4155&7801&7,53&43,5668&48,2375&55,7707&-0,0968&-0,2188\\\hline
            des-4-2-18&4147&7793&8,02&68,9178&45,7864&53,8027&0,5052&0,2809\\\hline
            des-4-2-19&4147&7793&7,80&47,6810&86,0082&93,8056&-0,4456&-0,4917\\\hline
            des-4-2-1&4117&7735&7,15&56,1578&33,9999&41,1507&0,6517&0,3647\\\hline
            des-4-2-20&4132&7752&7,22&67,2728&61,6027&68,8181&0,0920&-0,0225\\\hline
            des-4-2-21&4152&7796&7,39&80,0591&32,4701&39,8553&1,4656&1,0087\\\hline
            des-4-2-2&4151&7791&7,04&84,7390&19,9786&27,0144&3,2415&2,1368\\\hline
            des-4-2-3&4158&7806&7,24&21,0431&20,5398&27,7755&0,0245&-0,2424\\\hline
            des-4-2-4&4160&7808&7,78&39,6752&41,6503&49,4320&-0,0474&-0,1974\\\hline
            des-4-2-5&4160&7804&7,74&39,6361&17,1189&24,8620&1,3153&0,5942\\\hline
            des-4-2-6&4148&7792&7,25&6,3511&30,6947&37,9444&-0,7931&-0,8326\\\hline
            des-4-2-7&4148&7792&7,60&37,6652&76,8233&84,4218&-0,5097&-0,5538\\\hline
            des-4-2-8&4152&7796&7,20&44,6267&51,0165&58,2211&-0,1252&-0,2335\\\hline
            \end{tabular}}
        \caption{Measures on problems before and after OL normalization}
        \label{tab:preprocessingBenchmarks1}
    \end{center}
\end{table}
\begin{table}[htbp]
    \begin{center}\resizebox{\textwidth}{!}{
        \begin{tabular}{| c | c H | c | c | c | c | c | c |}
            \hline
            Problem&Size&OL size&OL time&Kissat&Kissat/OL&Kissat/OL+norm&Speed-up&Speed-up with norm\\
            \hline
            des-4-2-9&4149&7795&7,24&83,3318&80,6225&87,8636&0,0336&-0,0516\\\hline
            des-5-1-17&3309&6333&0,98&1000,0055&412,0046&412,9809&1,4272&1,4214\\\hline
            des-5-1-20&3311&6335&1,07&1000,0072&353,3140&354,3886&1,8304&1,8218\\\hline
            fig6.phx&9370&17775&3,48&0,1587&0,1482&3,6328&0,0705&-0,9563\\\hline
            icbrt&10505&19822&22,36&16,1837&0,0126&22,3761&1286,2816&-0,2767\\\hline
            icbrtinvalidvc&11053&20486&23,49&0,0139&0,0121&23,4988&0,1417&-0,9994\\\hline
            icbrtor&10505&19822&22,35&16,1764&0,0108&22,3563&1496,2579&-0,2764\\\hline
            icbrtorinvalidvc&11053&20486&23,58&0,0143&0,0128&23,5933&0,1205&-0,9994\\\hline
            isqrtadd&5551&10638&8,56&143,8542&185,7079&194,2715&-0,2254&-0,2595\\\hline
            isqrtaddeqcheck&4095&8021&3,45&303,6890&275,8074&279,2547&0,1011&0,0875\\\hline
            isqrtaddnoif&5978&11572&8,24&154,4876&165,4805&173,7160&-0,0664&-0,1107\\\hline
            isqrtaddnoifinvalidvc&6237&11705&8,40&0,0101&0,0074&8,4112&0,3601&-0,9988\\\hline
            isqrt&5551&10638&8,46&144,5731&186,6313&195,0863&-0,2254&-0,2589\\\hline
            isqrteqcheck&4095&8021&3,41&304,8153&277,1961&280,6068&0,0996&0,0863\\\hline
            isqrtinvalidvc&5807&10749&8,77&0,0097&0,0069&8,7772&0,3926&-0,9989\\\hline
            isqrtnoif&5978&11572&8,18&154,5150&165,5195&173,7013&-0,0665&-0,1105\\\hline
            isqrtnoifinvalidvc&6237&11705&8,61&0,0097&0,0071&8,6206&0,3703&-0,9989\\\hline
            \end{tabular}}
        \caption{Measures on problems before and after OL normalization}
        \label{tab:preprocessingBenchmarks2}
    \end{center}
\end{table}
\begin{table}[htbp]
    \begin{center}\resizebox{\textwidth}{!}{
        \begin{tabular}{| c | c H | c | c | c | c | c | c |}
            \hline
            Problem&Size&OL size&OL time&Kissat&Kissat/OL&Kissat/OL+norm&Speed-up&Speed-up with norm\\
            \hline
            lfsr-002-015-016&1779&1873&0,46&0,0307&0,0328&0,4963&-0,0658&-0,9382\\\hline
            lfsr-002-015-032&3459&3617&1,03&0,0840&0,0751&1,1066&0,1188&-0,9241\\\hline
            lfsr-002-031-016&3411&3537&0,88&0,0534&0,0383&0,9159&0,3946&-0,9416\\\hline
            lfsr-004-015-016&3786&4100&1,09&0,1270&0,1381&1,2232&-0,0804&-0,8962\\\hline
            maxand016&1042&1430&0,34&0,1318&0,1070&0,4430&0,2311&-0,7025\\\hline
            maxand032&3138&4950&1,38&0,3046&0,3017&1,6843&0,0097&-0,8192\\\hline
            maxor008&854&1436&0,41&0,1388&0,1672&0,5796&-0,1698&-0,7606\\\hline
            maxor016&3150&5708&1,43&1,1789&1,4069&2,8358&-0,1621&-0,5843\\\hline
            maxor032&11966&22700&27,33&26,4611&25,0209&52,3534&0,0576&-0,4946\\\hline
            maxxor016&3214&5836&1,48&191,5726&183,8764&185,3568&0,0419&0,0335\\\hline
            maxxormaxorand016&5047&9160&2,40&154,9670&218,6255&221,0217&-0,2912&-0,2989\\\hline
            minand016&1058&1655&0,43&0,1474&0,2479&0,6819&-0,4056&-0,7839\\\hline
            minandmaxor008&956&1752&0,41&0,2125&0,1767&0,5870&0,2025&-0,6381\\\hline
            minandmaxor016&3604&6856&1,72&1,2004&1,6016&3,3188&-0,2505&-0,6383\\\hline
            minor032&3138&4950&1,32&0,3358&0,2951&1,6182&0,1378&-0,7925\\\hline
            minxor016&1108&1630&0,41&0,3967&0,4373&0,8467&-0,0927&-0,5314\\\hline
            minxor032&3268&5342&1,44&22,8997&10,1471&11,5842&1,2568&0,9768\\\hline
            minxorminand008&704&1272&0,33&0,3692&0,4313&0,7574&-0,1440&-0,5126\\\hline
            minxorminand016&2200&4120&0,76&5,5810&3,1965&3,9609&0,7460&0,4090\\\hline
            minxorminand032&7496&14424&7,28&249,3727&114,6178&121,8933&1,1757&1,0458\\\hline
            mulhs08&1092&2103&0,47&5,4162&5,6992&6,1715&-0,0497&-0,1224\\\hline
            mult-ub-8x8-1.sf&2457&4765&0,93&11,8968&11,5264&12,4573&0,0321&-0,0450\\\hline
            nextpoweroftwo032&552&672&0,42&0,0547&0,0507&0,4755&0,0779&-0,8850\\\hline
            nextpoweroftwo064&1128&1376&1,64&0,2916&0,1206&1,7644&1,4186&-0,8347\\\hline
            nlzbe016&1093&709&0,29&0,0467&0,0043&0,2966&9,7509&-0,8424\\\hline
            nlzbe032&4258&1838&0,56&0,1221&0,0042&0,5645&27,9365&-0,7837\\\hline
            nlzbsdown032&3336&1458&0,44&0,1003&0,0037&0,4481&26,1459&-0,7762\\\hline
            problem-15&1744&3344&0,60&27,9919&54,5389&55,1420&-0,4868&-0,4924\\\hline
            problem-1&3418&6752&3,29&0,0977&0,2340&3,5210&-0,5823&-0,9722\\\hline
            problem-22&8165&15862&1,75&69,7604&112,4134&114,1596&-0,3794&-0,3889\\\hline
            problem-2&3277&6499&5,25&0,0625&0,0051&5,2582&11,3281&-0,9881\\\hline
            problem-6&4350&7366&1,23&0,1514&0,2053&1,4341&-0,2625&-0,8944\\\hline
            problem-9&2461&4869&1,09&28,2475&2,2805&3,3701&11,3864&7,3818\\\hline
            smulov1bw12&1585&3133&0,60&382,2924&294,6817&295,2769&0,2973&0,2947\\\hline
            smulov1bw16&2847&5649&1,07&824,1497&859,3104&860,3844&-0,0409&-0,0421\\\hline
            smulov2bw016&2073&3960&0,68&339,9424&180,8193&181,5008&0,8800&0,8730\\\hline
            smulov4bw0032&1797&3485&0,48&0,0053&0,0033&0,4828&0,6314&-0,9890\\\hline
            smulov4bw0064&7685&15149&3,32&0,0097&0,0075&3,3311&0,3050&-0,9971\\\hline
            \end{tabular}}
        \caption{Measures on problems before and after OL normalization}
        \label{tab:preprocessingBenchmarks3}
    \end{center}
\end{table}

\end{document}